\algnewcommand\algorithmicreturn{\textbf{return }}
\algnewcommand\RETURN{\State \algorithmicreturn}%
\definecolor{airforceblue}{rgb}{0.36, 0.54, 0.66}
\definecolor{aliceblue}{rgb}{0.94, 0.97, 1.0}
\definecolor{bluegray}{rgb}{0.4, 0.6, 0.8}
\definecolor{bluefill}{rgb}{0.5, 0.7, 0.9}
\definecolor{cornflowerblue}{rgb}{0.39, 0.58, 0.93}
\definecolor{coolblack}{rgb}{0.0, 0.18, 0.39}
\definecolor{beaublue}{rgb}{0.74, 0.83, 0.9}
\renewcommand{\paragraph}[1]{%
\noindentparagraph{\textbf{\textup{#1.}}}
%\noindentparagraph{\textbf{\textsf{\textup{#1.}}}}
}%
\newcommand{\norm}[1]{\left\lVert #1 \right\rVert}
\DeclarePairedDelimiterX{\ind}[1]{\mathbb{I}\{}{\}}{#1}
\DeclareMathOperator*{\argmin}{argmin}
\newcommand{\calU}{\mathcal{U}}
\newcommand{\calV}{\mathcal{V}}
\newcommand{\calO}{\mathcal{O}}
\newcommand{\R}{\mathbb{R}}
\newcommand{\w}{\mathbf{w}}
\newcommand{\bo}{\mathbf{o}}
\newcommand{\br}{\mathbf{r}}
\newcommand{\bfG}{\mathbf{G}}
\newcommand{\s}{\mathbf{s}}
\newcommand{\bt}{\mathbf{t}}
\newcommand{\bR}{\mathbf{R}}
\newcommand{\I}{\mathbf{I}}
\newcommand{\bfH}{\mathbf{H}}
\newcommand{\bfU}{\mathbf{U}}
\newcommand{\bfV}{\mathbf{V}}
\newcommand{\1}{\mathds{1}}
\newcommand{\bu}{\mathbf{u}}
\newcommand{\bv}{\mathbf{v}}
\newcommand{\bLambda}{\mathbf{\Lambda}}
\newcommand{\ML}{\textit{ML-20M}\xspace}
\newcommand{\MSD}{\textit{MSD}\xspace}
\newcommand{\Epinions}{\textit{Epinions}\xspace}
\definecolor{dkred}{rgb}{0.8,0,0}
\definecolor{tickgreen}{rgb}{0,0.6,0}
\newcommand{\tick}{\textcolor{tickgreen}{\CheckmarkBold}}
\newcommand{\fail}{\textcolor{red}{\XSolidBrush}}
\crefname{lemma}{Lemma}{Lemmas}
\crefname{theorem}{Theorem}{Theorems}
\crefname{claim}{Claim}{Claims}
\crefname{remark}{Remark}{Remarks}
\crefname{observation}{Observation}{Observations}
\crefname{corollary}{Corollary}{Corollaries}
\crefname{appendix}{Appendix}{Appendices}
\crefname{section}{Section}{Sections}
\crefname{algorithm}{Algorithm}{Algorithms}
\crefname{equation}{Eq.}{Eqs.}
\crefname{figure}{Figure}{Figures}
\crefname{table}{Table}{Tables}
\title[Scalable and Provably Fair Exposure Control for Large-Scale Recommender Systems]{Scalable and Provably Fair Exposure Control\\for Large-Scale Recommender Systems}
\author{Riku Togashi}
\affiliation{
    \institution{CyberAgent}
    \state{Tokyo}
    \country{Japan}
}
\email{rtogashi@acm.org}
\author{Kenshi Abe}
\affiliation{
    \institution{CyberAgent}
    \state{Tokyo}
    \country{Japan}
}
\email{abe_kenshi@cyberagent.co.jp}
\author{Yuta Saito}
\affiliation{
    \institution{Cornell University}
    \state{New York}
    \country{United States}
}
\email{ys552@cornell.edu}
\gdef\@copyrightpermission{
  \begin{minipage}{0.3\columnwidth}
   \href{https://creativecommons.org/licenses/by/4.0/}{\includegraphics[width=0.90\textwidth]{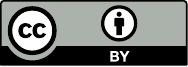}}  % for arXiv version
  \end{minipage}\hfill
  \begin{minipage}{0.7\columnwidth}
   \href{https://creativecommons.org/licenses/by/4.0/}{This work is licensed under a Creative Commons Attribution International 4.0 License.}
  \end{minipage}
  \vspace{5pt}
}
\begin{document}

\begin{abstract}
Typical recommendation and ranking methods aim to optimize the satisfaction of users, but they are often oblivious to their impact on the items (e.g., products, jobs, news, video) and their providers. However, there has been a growing understanding that the latter is crucial to consider for a wide range of applications, since it determines the utility of those being recommended. Prior approaches to fairness-aware recommendation optimize a regularized objective to balance user satisfaction and item fairness based on some notion such as exposure fairness. These existing methods have been shown to be effective in controlling fairness, however, most of them are computationally inefficient, limiting their applications to only unrealistically small-scale situations. This indeed implies that the literature does not yet provide a solution to enable a flexible control of exposure in the industry-scale recommender systems where millions of users and items exist. To enable a computationally efficient exposure control even for such large-scale systems, this work develops a scalable, fast, and fair method called \emph{\textbf{ex}posure-aware \textbf{ADMM} (\textbf{exADMM})}. exADMM is based on implicit alternating least squares (iALS), a conventional scalable algorithm for collaborative filtering, but optimizes a regularized objective to achieve a flexible control of accuracy-fairness tradeoff. A particular technical challenge in developing exADMM is the fact that the fairness regularizer destroys the separability of optimization subproblems for users and items, which is an essential property to ensure the scalability of iALS. Therefore, we develop a set of optimization tools to enable yet scalable fairness control with provable convergence guarantees as a basis of our algorithm. Extensive experiments performed on three recommendation datasets demonstrate that exADMM enables a far more flexible fairness control than the vanilla version of iALS, while being much more computationally efficient than existing fairness-aware recommendation methods.
\end{abstract}

\begin{CCSXML}
<ccs2012>
   <concept>
       <concept_id>10002951.10003260.10003261.10003269</concept_id>
       <concept_desc>Information systems~Collaborative filtering</concept_desc>
       <concept_significance>500</concept_significance>
       </concept>
 </ccs2012>
\end{CCSXML}

\ccsdesc[500]{Information systems~Collaborative filtering}

\keywords{Recommender systems,
Fairness,
Collaborative filtering,
Scalability
}

\maketitle

\section{Introduction}
Personalized recommender system has been a core function of many online platforms such as e-commerce, advertising, dating app, and online job markets. In these systems, the items to be recommended and ranked are products, job candidates, or other entities that transfer economic benefit, and it is widely recognized that how they are exposed to users has a crucial influence on their economic success~\cite{morik2020controlling,saito2022fair,singh2019policy}. It has also been recognized that recommender systems are responsible for and should be aware of potential societal concerns in diverse contexts, such as popularity bias~\cite{park2008long,abdollahpouri2017controlling,abdollahpouri2019managing,saito2020unbiased,saito2020pairwise}, sales concentration in e-commerce~\cite{fleder2009blockbuster,brynjolfsson2011goodbye}, filter bubbles, biased news recommendation in social media sites~\cite{resnick2013bursting,helberger2018exposure}, and item-side fairness in two-sided markets~\cite{burke2017multisided,abdollahpouri2020multistakeholder}. In essence, these concerns all demand a form of \emph{exposure control} to ensure that each item receives ``fair'' exposure to relevant users while not greatly sacrificing user satisfaction. However, implementing exposure control is technically challenging, particularly in large-scale systems, due to the complex and often intractable nature of its objective function. This work therefore aims to develop an effective and scalable approach to control the tradeoff between exposure allocation and recommendation accuracy.

\paragraph{Related Work}
In the context of fairness-aware recommendation and ranking, there exist numerous studies on learning fair probabilistic rankings based on \emph{pre-trained} preferences~\cite{biega2018equity,singh2018fairness,memarrast2021fairness,do2021two,do2022optimizing,saito2022fair}. The problem is often formulated as a convex optimization on doubly stochastic matrices with the size of $|\calV| \times |\calV|$ (where $\calV$ is the item set) for each user. Whereas formulating ranking optimization through doubly stochastic matrices is advantageous for differentiability and convexity, this approach may not be applicable to most industry systems because of the space complexity of $\calO(|\calU||\calV|K)$ for top-$K$ recommendation (where $\calU$ is the user set). More recent methods~\cite{do2021two,usunier2022fast,do2022optimizing} are based on the Frank--Wolfe-type efficient algorithm~\cite{Frank1956AnAF,jaggi2013revisiting}, which requires a top-$K$ sorting of items for each user at each iteration, resulting in a computational cost of $\calO(|\calU||\calV| \log K)$ per training epoch, which is still prohibitively high. \citet{patro2020fairrec} proposed a greedy round-robin algorithm called FairRec, which also does not scale well because its round-robin scheduling is not parallelizable. Notably, these \emph{post-processing} methods require, a priori, a $|\calU|\times|\calV|$ (dense) preference matrix (e.g., a preference matrix estimated by an matrix factorization, MF, model). The preference matrix is costly to retain in the memory space or even impossible to materialize due to its cost of $\calO(|\calU||\calV|)$. Therefore, these post-processing approaches cannot exploit feedback sparsity, leading to the time and space costs of $\calO(|\calU||\calV|)$. Note that a similar approach is also adopted to control popularity bias~\cite{steck2018calibrated,abdollahpouri2019managing}. In particular, \citet{abdollahpouri2019managing} proposed a re-ranking algorithm based on xQuAD~\cite{santos2010exploiting}, which also suffers from a quadratic computational cost to the number of items for each user, and thus it is infeasible for large-scale systems.

In contrast to the post-processing approach, various studies have explored an \emph{in-processing} counterpart where a single model is trained to jointly optimize recommendation accuracy and exposure fairness~\cite{kamishima2011fairness,kamishima2013efficiency,yao2017beyond,abdollahpouri2017controlling,burke2018balanced,singh2019policy,zehlike2020reducing,morik2020controlling,yadav2021policy,oosterhuis2021computationally,wu2022multi}.
To represent a stochastic ranking policy, several studies~\cite{singh2019policy,yadav2021policy,oosterhuis2021computationally} rely on the Plackett--Luce (PL) model~\cite{plackett1975analysis}, which has a cost of $\calO(|\calU||\calV|K)$ per training epoch. When optimizing the PL model and the joint objective of ranking quality and exposure equality, stochastic gradient descent (SGD) is often applied. Although SGD allows flexible objectives and reduces the computational cost per training step, it is often difficult to apply in practice due to its severely slow convergence, particularly when the item catalogue is large~\cite{yu2014parallel,chen2023revisiting}.

Compared to often inefficient fairness methods, iALS is a conventional algorithm to enable scalable recommender systems. It has also been shown by \citet{rendle2020neural,rendle2021revisiting} to perform more effectively in terms of recommendation accuracy than neural collaborative filtering (NCF)~\cite{he2017neural} with a proper hyperparameter tuning. \citet{rendle2021revisiting} reported that MF models with the ALS solver~\cite{hu2008collaborative} can be further improved by using a customized Tikhonov regularization, which is an extension of the well-known technique proposed by \citet{zhou2008large}. This latest version of iALS~\cite{rendle2021revisiting} shows competitive accuracy with that of state-of-the-art methods, such as Mult-VAE~\cite{liang2018variational}, while substantially improving computational cost and scalability. This empirical evidence motivates us to extend iALS to build a first recommendation method that is fairness-aware and scalable to large-scale systems with over million users and items. 

\paragraph{Our Contributions}
To enable a scalable and provably fair exposure control for large-scale recommender systems, this work develops a new recommendation algorithm called \emph{\textbf{exposure-aware ADMM (exADMM)}}, which is an extension of the celebrated iALS algorithm to achieve a scalable and flexible exposure-control. This extension is novel and non-trivial, since any fairness regularizer in the objective function introduces dependency between all users and items. In particular, this intrinsic dependency inevitably destroys optimization separability, which is a crucial property that ensures the scalability of iALS. To overcome this technical difficulty in building a scalable method to control item fairness, we develop a set of novel optimization tools based on the alternating direction method of multipliers (ADMM)~\cite{boyd2011distributed}. Furthermore, we provide a convergence guarantee for the proposed algorithm in terms of a non-trivial objective that includes a fairness regularizer, despite the non-convex and multi-block optimization. Finally, we provide a comprehensive empirical analysis on three datasets and demonstrate that exADMM outperforms the vanilla version of iALS in terms of fairness control while maintaining its scalability and computational efficiency. In addition, exADMM achieves similar effectiveness in terms of accuracy-fairness tradeoff compared to typical fair recommendation methods while being much more scalable and computationally efficient.

Our contributions can be summarized as follows.
\begin{itemize}
    \item We propose a first scalable method (exADMM) to enable a flexible control of accuracy-fairness tradeoff for large-scale recommender systems with over million users and items.
    \item We develop a set of optimization tools based on ADMM to enable an extension of iALS to exADMM with provable convergence guarantees.
    \item We empirically demonstrate that exADMM achieves similar scalability to iALS and similarly effective accuracy-fairness tradeoff compared to existing (computationally inefficient) fairness methods.
\end{itemize}

\section{Problem Formulation and iALS}
This section formulates the typical recommendation problem and the core technical details of iALS as a basis of our method.

Given users $\mathcal{U} \coloneqq [|\calU|]$ and items $\mathcal{V} \coloneqq [|\calV|]$, let $\bR \in \{0, 1\}^{|\calU| \times |\calV|}$ be an implicit feedback matrix whose $(i,j)$-element has the value of $1$ when user $i \in \calU$ has interacted with item $j \in \calV$; otherwise, it has a value of $0$. We represent the number of observed interactions by that of non-zero entries in $\bR$, which is denoted as $\mathrm{nz}(\bR)$. iALS is an MF-based method, and its model parameters are $d$-dimensional embeddings, $\bfU \in \R^{|\calU|\times d}$ and $\bfV \in \R^{|\calV|\times d}$ for users and items, respectively. These parameters are typically learned by minimizing the following objective:
\begin{align}
    \notag
    L(\bfV, \bfU) &\coloneqq \frac{1}{2}\norm{\bR \odot (\bR - \bfU\bfV^\top)}_F^2 + \frac{\alpha_0}{2}\norm{\bfU\bfV^\top}_F^2 \\ 
    &+ \frac{1}{2}\norm{\bLambda_U^{1/2} \bfU}_F^2 + \frac{1}{2}\norm{\bLambda_V^{1/2}\bfV}_F^2,
    \label{eq:obj-imf}
\end{align}
where operator $\odot$ is the Hadamard element-wise product, and the second term is the L2 norm of the recovered score matrix $\bfU\bfV^\top$ (i.e., implicit regularizer~\cite{bayer2017generic}) with a weight parameter $\alpha_0 > 0$. $\bLambda_U \in \R^{|\calU| \times |\calU|}$ and $\bLambda_V \in \R^{|\calV| \times |\calV|}$ are diagonal matrices (a.k.a. Tikhonov matrices~\cite{zhou2008large}) representing the weights for L2 regularization. Let $\br_{i,\cdot}$ and $\br_{\cdot,j}$ be the (column) vectors that correspond to the $i$-th row and the $j$-th column of $\bR$, respectively. The frequency-based weighting strategy sets the weights with internal hyperparameters $\lambda_{L2}>0$ and exponent $\eta \geq 0$ as follows:
\begin{align*}
  (\bLambda_U)_{i,i} \coloneqq \lambda_{L2} \left(\norm{\br_{i,\cdot}}_1 \!+\! \alpha_0|\calV|\right)^{\eta}, \,\,(\bLambda_V)_{j,j}\coloneqq \lambda_{L2} \left(\norm{\br_{\cdot,j}}_1 \!+\! \alpha_0|\calU|\right)^{\eta}.
\end{align*}
Hereafter, we use $\lambda_U^{(i)} \coloneqq (\bLambda_U)_{i,i}$ and $\lambda_V^{(j)} \coloneqq (\bLambda_V)_{j,j}$.

iALS solves the minimization problem in \cref{eq:obj-imf} by alternating the optimization of $\bfV$ and $\bfU$.
Specifically, in the $k$-th step, iALS updates $\bfU$ and $\bfV$ via
\begin{align*}
  \bfU^{k+1} &= \argmin_{\bfU} \|\bR \odot (\bR - \bfU(\bfV^{k})^\top)\|_F^2 + \alpha_0\|\bfU(\bfV^{k})^\top\|_F^2 + \|\bLambda_U^{1/2}\bfU\|_F^2, \\
  \bfV^{k+1} &= \argmin_{\bfV}\|\bR \odot (\bR - {\bfU^{k+1}}\bfV^\top)\|_F^2 + \alpha_0\|{\bfU^{k+1}}\bfV^\top\|_F^2 + \|\bLambda_V^{1/2}\bfV\|_F^2.
\end{align*}
Owing to the alternating strategy, the optimization of $\bfU$ and $\bfV$ can be divided into \textit{independent} convex problems for each row of $\bfU$ and $\bfV$. Let us use $\bu_i \in \R^d$ to denote the (column) vector that corresponds to the $i$-th row of $\bfU$. Then, its update can simply be done via the following row-wise independent problem:
\begin{align*}
  \bu_i^{k+1} &= \argmin_{\bu_i} \norm{\br_i \odot (\br_i - \bfV^k\bu_i)}_2^2 + \alpha_0\norm{\bfV^k\bu_i}_2^2 + \lambda_U^{(i)}\norm{\bu_i}_2^2 \\
  &= \left(\sum_{j \in \calV}r_{i,j} \bv_j^k (\bv_j^k)^\top + \alpha_0\bfG_V^k + \lambda_U^{(i)}\I\right)^{-1}\sum_{j \in \calV}r_{i,j}\bv_j^k,
\end{align*}
where $\bfG_V^k \coloneqq \sum_{j \in \calV}\bv_j^k ({\bv_j}^k)^\top = (\bfV^k)^\top\bfV^k$ is the Gramian of the item embeddings in the $k$-th step, where $\bv_j^k \in \R^{d}$ denotes the column vector that corresponds to the $j$-th row of $\bfV^k$. When $\bfG_V^k$ is pre-computed, the expected computational cost for each subproblem is reduced to $\calO((\mathrm{nz}(\bR)/|\calV|)d^2 + d^3)$ (a.k.a. the Gramian trick~\cite{rendle2021revisiting}), which consists of \textbf{(i)} computation of the Gramian for interacted items $\sum_{j \in \calV}r_{i,j}(\bv_j^k)(\bv_j^k)^\top$ in $\calO((\mathrm{nz}(\bR)/|\calV|)d^2)$ and \textbf{(ii)} solving a linear system $\bfH^k\bu_i^{k+1}=\sum_{i \in \calU}r_{i,j}\bv_j^k$, where $\bfH^k 
\coloneqq \sum_{j \in \calV}r_{i,j}\bv_j^k(\bv_j^k)^\top+\alpha_0 \bfG_V^k +\lambda_U^{(i)}\I$ in $\calO(d^3)$. Because the update of $\bfV$ is analogous to that of $\bfU$, the total cost of updating $\bfU$ and $\bfV$ is $\calO(\mathrm{nz}(\bR)d^2 + (|\calU|+|\calV|)d^3)$. This is much lower than $\calO(|\calU||\calV|d^2 + (|\calU|+|\calV|)d^3)$ due to feedback sparsity, i.e., $\mathrm{nz}(\bR)\ll|\calU||\calV|$. In summary, iALS retains scalability, despite its objective involving \emph{all} user-item pairs. The crux is that iALS exploits the Gramian trick and feedback sparsity to avoid the intractable factor $\calO(|\calU||\calV|)$.

\section{The exADMM Algorithm}
This section develops our proposed algorithm, exADMM, which is an extension of iALS to enable a scalable exposure control.

\subsection{A Regularized Objective for Fairness}
The aim of this paper is to enable a scalable control of item exposure so that individual items can receive attention from users more fairly while not sacrificing recommendation accuracy much. To this end, we consider minimizing the following regularized objective.
\begin{align}
    \min_{\bfV, \bfU} \underbrace{L(\bfV, \bfU)}_{\textit{typical prediction loss}} + \quad \lambda_{ex} \cdot \underbrace{R_{ex}(\bfV, \bfU)}_{\mathclap{\textit{(item) fairness regularizer}}},
    \label{eq:obj-imf-fair}
\end{align}
where $R_{ex}(\bfV, \bfU)$ is a penalty term to induce exposure equality, and $\lambda_{ex}$ is the weight hyperparameter to control the balance between recommendation quality and exposure equality. As already discussed, the scalability of iALS is due to the simplicity of its objective. To retain this desirable property, we need to carefully define an exposure regularizer $R_{ex}(\bfV, \bfU)$ in a still tractable way.

To define our regularizer, let us denote the predicted score for user $i$ and item $j$ by $\hat{r}_{i,j}=(\bfU\bfV^\top)_{i,j}$, and we predict the item ranking for user $i$ according to the decreasing order of $\{\hat{r}_{i,j} \mid j \in \calV\}$. Evidently, there is a monotonic relationship between $\hat{r}_{i,j}$ and the amount of exposure that item $j$ will receive in a ranked list. Hence, we can evaluate exposure inequality under a recommendation model by the variability of items' scores averaged over the users, i.e., $\frac{1}{|\calU|}\sum_{i \in \calU}\hat{r}_{i,j}$ (this is for item $j$). There exist several possible measures of variability such as Gini indices~\cite{atkinson1970measurement,do2022optimizing}, standard deviation~\cite{do2021two}, and variance~\cite{wu2021tfrom}.
In this work, we consider the following second moment of the predicted scores as the regularizer $R_{ex}$.
\begin{align*}
 R_{ex} (\bfV,\bfU)
  &\coloneqq \frac{1}{|\calV|}\sum_{j \in \calV}\left(\frac{1}{|\calU|}\sum_{i \in \calU}\hat{r}_{i,j}\right)^2 = \frac{1}{|\calV|}\norm{\frac{1}{|\calU|}\bfV\bfU^\top\1}_2^2,
\end{align*}
where $\1$ is the $|\calU| \times 1$ column vector of which the elements are all $1$. The fairness regularizer defined above is the L2 norm of the average scores predicted for the items. This is considered one of the reasonable measures of exposure inequality since it takes a large value for items whose average scores are either extremely large or small. Moreover, we can draw a clear technical distinction between our fairness regularizer $R_{ex} (\bfV,\bfU)$ and implicit regularizer $\|\bfV\bfU^\top\|_F^2$ of the vanilla iALS in \cref{eq:obj-imf}. That is, the implicit regularizer penalizes the score $(\bfU\bfV^\top)_{i,j}$ of each user-item pair $(i,j) \in \calU \times \calV$ \emph{independently}, whereas our penalty term introduces a \emph{structural dependency} into the recovered matrix $\bfU\bfV^\top$. Unfortunately, this structural dependency destroys the optimization separability with respect to the rows of $\bfU$ due to the averaged user embedding $(1/|\calU|)\bfU^\top\1$ that appears in its definition. Optimizing $R_{ex}$ is thus not straightforward, particularly in large-scale settings, which motivates us to develop novel tools to handle this fairness regularizer in a scalable and provable fashion.

\subsection{Scalable Optimization based on ADMM}
To enable parallel optimization of our exposure-controllable objective in \cref{eq:obj-imf-fair}, we adopt an approach based on ADMM, which is an optimization framework with high parallelism~\cite{boyd2011distributed} and has been adopted to enable scalable recommendations~\cite{yu2014distributed,cheng2014lorslim,smith2017constrained,ioannidis2019coupled,steck2020admm,steck2021negative}. To decouple the row- and column-wise dependencies in $\bfU$ introduced by our fairness regularizer, we first reformulate the optimization problem by introducing an auxiliary variable $\s \in \R^d$ as
\begin{align}
\label{eq:exADMM-obj}
  \min_{\bfV,\bfU,\s} \; L(\bfV, \bfU) + \frac{\lambda_{ex}}{2}\norm{\bfV\s}_2^2,\quad \text{s.t.~} \s = \frac{1}{|\calU|}\bfU^\top\1.
\end{align}
Here, we replaced $(1/|\calU|)\bfU^\top\1$ in the fairness regularizer with $\s$ while introducing an additional linear equality constraint. This can be further reformulated to the following saddle-point optimization:
\begin{align*}
  \min_{\bfV,\bfU,\s} \max_{\w}L_\rho(\bfV, \bfU, \s, \w),
\end{align*}
where
\begin{align*}
  &L_\rho(\bfV, \bfU, \s, \w) \\
  &\coloneqq L(\bfV, \bfU) + \frac{\lambda_{ex}}{2}\norm{\bfV\s}_2^2 + \frac{\rho}{2}\norm{\frac{1}{|\calU|}\bfU^\top\1 - \s + \w}_2^2 - \frac{\rho}{2}\|\w\|_2^2.
\end{align*}
$L_\rho$ is the Lagrangian augmented by the ADMM penalty term with weight $\rho>0$, and $\w \in \R^{d}$ is the dual variable (i.e., Lagrange multipliers) scaled by $1/\rho$. We can perform optimization in the $(k+1)$-th step by iteratively updating each variable as
\begin{align*}
  \bfV^{k+1} &= \argmin_\bfV  L_\rho(\bfV, \bfU^{k}, \s^{k}, \w^{k}), \\
  \bfU^{k+1} &= \argmin_\bfU  L_\rho(\bfV^{k+1}, \bfU, \s^{k}, \w^{k}), \\
  \s^{k+1} &= \argmin_\s  L_\rho(\bfV^{k+1}, \bfU^{k+1}, \s, \w^{k}), \\
  \w^{k+1} &= \w^{k} + \frac{1}{|\calU|}(\bfU^{k+1})^\top\1 - \s^{k+1},
\end{align*}
The update of $\w$ corresponds to the gradient ascent with respect to the dual problem $\max_{\w}\min_{\bfV,\bfU,\s}L_\rho(\bfV,\bfU,\s,\w)$ with step size $\rho$~\cite{boyd2011distributed}.

\subsubsection{Update of $\bfV$.}
Next, we derive how to update $\bfV$ in the $(k+1)$-th step, which comprises independent optimization problems for the rows of $\bfV$. Suppose that $\bv_j^{k+1} \in \R^{d}$ and $\br_{\cdot,j} \in \{0,1\}^{|\calU|}$ are the column vectors indicating the $j$-th row of $\bfV^{k+1}$ and the $j$-th column of $\bR$, respectively. The update can then be performed by solving the following linear system.
\begin{align*}
   \bv_{j}^{k+1} &=\argmin_{\bv_j}  \Bigl\{\frac{1}{2}\norm{\br_{\cdot,j} \odot (\br_{\cdot,j} - \bfU^{k}\bv_j)}_2^2 + \frac{\alpha_0}{2}\norm{\bfU^{k}\bv_j}_2^2 \\
   & \qquad \phantom{\argmin_{\bv_j}  \Bigl\{}+ \frac{\lambda_V^{(j)}}{2}\norm{\bv_j}_2^2 + \frac{\lambda_{ex}}{2}\left(\bv_j^\top \s^{k}\right)^2\Bigr\}\\
  &= \left(\sum_{i \in \calU}r_{i,j} \bu_i^{k} ({\bu_i^{k}})^\top \!+\! \alpha_0\bfG_U^{k} \!+\! \lambda_{ex} \s^{k}(\s^{k})^\top \!+\! \lambda_V^{(j)}\I\right)^{-1} \!\!\sum_{i \in \calU}r_{i,j}\bu_i^{k},
\end{align*}
where $\bfG_U^{k} \coloneqq \sum_{i \in \calU} \bu_i^{k}{(\bu_i^{k})}^\top$ is the Gramian of the user embeddings in the $k$-th step. Notably, we can pre-compute $\bfG_U^{k}$ and $\s^{k}(\s^{k})^\top$, and thus the update of $\bfV$ achieves the same complexity as that of iALS.

\subsubsection{Update of $\bfU$.}
Updating $\bfU$ is the most intricate part of our algorithm. In the $(k+1)$-th step, our aim is to solve the following
\begin{align*}
   \argmin_{\bfU} &\Biggl\{ L(\bfV^{k+1}, \bfU) + \frac{\lambda_{ex}}{2}\|\bfV^{k+1}\s^k\|_2^2 + \frac{\rho}{2}\norm{\frac{1}{|\calU|}\bfU^\top\1 - \s^{k} + \w^{k}}_2^2\Biggr\}.
\end{align*}
The issue here is that the penalty term of ADMM (the fourth term of RHS) destroys the separability regarding the rows of $\bfU$. We resolve this using a proximal gradient method~\cite{rockafellar1976monotone,duchi2009efficient,liu2019linearized}. Specifically, we consider a linear approximation (i.e., the first-order Taylor expansion around the current estimate $\bfU^{k}$) of the objective except for the ADMM penalization. This yields the following objective:
\begin{align*}
   \bfU^{k+1} = \argmin_{\bfU} &\Bigl\{ \langle \bfU - \bfU^{k}, \nabla_{\bfU}g(\bfV^{k+1}, \bfU^{k}, \s^{k}) \rangle_F + \frac{1}{2\gamma}\norm{\bfU - \bfU^{k}}_F^2
   \\
   & \qquad + \frac{\rho}{2}\norm{\frac{1}{|\calU|}\bfU^\top\1 - \s^{k} + \w^{k}}_2^2\Bigr\},
\end{align*}
where $g(\bfV,\bfU,\s) \coloneqq L(\bfV, \bfU) + (\nicefrac{\lambda_{ex}}{2})\norm{\bfV\s}_2^2$. Here, we introduce a regularization term $(1/2\gamma)\lVert\bfU - \bfU^{k}\rVert_F^2$, which is referred to as the proximal term~\cite{rockafellar1976monotone}. By completing the square, the above update can be rearranged into the following parallel and non-parallel steps:
\begin{align*}
   \bfU^{k+1} &= \argmin_{\bfU} \frac{\rho}{2}\norm{\frac{1}{|\calU|}\bfU^\top\1 - \s^{k} + \w^{k}}_2^2 + \frac{1}{2\gamma}\norm{\bfU - \left(\bfU^{k}-\gamma\nabla_{\bfU}g^{k}\right)}_F^2 \\
   & = \underbrace{\mathrm{prox}_{\gamma}^{k}}_{\text{non-parallel}}( \underbrace{\vphantom{\mathrm{prox}_{\gamma}^{k}}\bfU^{k}-\gamma\nabla_{\bfU}g^{k}}_{\text{parallel}}),
\end{align*}
where
\begin{align*}
    \mathrm{prox}_{\gamma}^{k}(\widetilde{\bfU}) &\coloneqq \argmin_{\bfU} \frac{\rho}{2}\norm{\frac{1}{|\calU|}\bfU^\top\1 - \s^k + \w^k}_2^2 + \frac{1}{2\gamma}\norm{\bfU-\widetilde{\bfU}}_F^2\\
    &= \left(\frac{\rho}{|\calU|^2} \1\1^\top + \frac{1}{\gamma}\I\right)^{-1}\left(\frac{1}{\gamma}\widetilde{\bfU} + \frac{\rho}{|\calU|}\1(\s^{k} - \w^{k})^\top\right).
\end{align*}
$\nabla_{\bfU}g^{k}$ is used to represent $\nabla_{\bfU}g(\bfV^{k+1}, \bfU^{k}, \s^{k})$ for brevity. Note that the term $\bfU^{k}-\gamma\nabla_{\bfU}g^{k}$ corresponds to a gradient descent with respect to the iALS objective.\footnote{Note that $\nabla_{\bfU}g(\bfV,\bfU,\s)$ is equivalent to the gradient of $L(\bfV, \bfU)$ with respect to $\bfU$ because we can ignore the constant exposure penalty $(\lambda_{ex}/2)\|\bfV\s\|_2^2$.} Therefore, we can update $\bfU$ in two row-wise parallel and non-parallel steps, that is, \textbf{(i)} gradient descent $\widetilde{\bfU}^{k+1}=\bfU^{k}-\gamma\nabla_{\bfU}g^{k}$ and \textbf{(ii)} proximal mapping $\bfU^{k+1}=\mathrm{prox}_{\gamma}^{k}(\widetilde{\bfU}^{k+1})$.

\paragraph{Parallel gradient computation}
The gradient $\nabla_{\bfU}g(\bfV^{k+1}, \bfU^{k}, \s^{k})$ can be independently computed for each row of $\bfU$ as follows:
\begin{align*}
   \nabla_{\bu_i}g^{k}
   &= \left(\sum_{j \in \calV} r_{i,j} {\bv_j}^{k+1}{\left(\bv_j^{k+1}\right)}^\top + \alpha_0\bfG_V^{k+1} + \lambda_U^{(i)} \I\right)\bu_i^{k} - \sum_{j \in \calV}r_{i,j}\bv_j^{k+1}.
\end{align*}
Similar to iALS, we can efficiently compute the gradient by pre-computing the Gramian $\bfG_V^{k+1}=(\bfV^{k+1})^\top\bfV^{k+1}$. Thus, the gradient descent $\bfU^k - \nabla_{\bfU}g^k$ can be performed in parallel with respect to users while maintaining efficiency by exploiting the Gramian trick and feedback sparsity. It should be noted that we can avoid computing the inverse Hessian in $\calO(d^3)$ unlike the $\bfU$ step of iALS.

\paragraph{Efficient proximal mapping}
The proximal mapping step requires a costly inversion of the $|\calU| \times |\calU|$ matrix in $\calO(|\calU|^3)$ for a naive computation.
This is problematic because, in practice, $\rho$ and $\gamma$ may increase/decrease during iterations~\cite{boyd2011distributed}. However, we can indeed compute the inverse matrix efficiently by leveraging the Sherman-Morrison formula~\cite{sherman1950adjustment} (a special case of the Woodbury matrix identity~\cite{woodbury1950inverting}), which yields the following
\begin{align*}
   \left(\frac{\rho}{|\calU|^2}\1\1^\top + \frac{1}{\gamma}\I\right)^{-1} &=
   -\frac{(\gamma\I)(\rho/|\calU|^2)\1\1^\top(\gamma\I)}{1 + (\rho/|\calU|^2)\1^\top (\gamma \I) \1} + \gamma \I \\
   &= \gamma\left(-\frac{1}{|\calU|^2\left(\frac{1}{|\calU|} \!+\! \frac{1}{\rho\gamma}\right)}\1\1^\top + \I\right).
\end{align*}
Therefore, we can derive the proximal mapping $\mathrm{prox}_{\gamma}^k$ as the following closed-form solution:
\begin{align}
  \label{eq:proximal_mapping}
\mathrm{prox}_{\gamma}^{k}(\bfU)
= \left(\frac{-1}{|\calU|^2\left(\frac{1}{|\calU|} \!+\! \frac{1}{\rho\gamma}\right)}\1\1^\top+\I \right)\left(\bfU \!+\! \frac{\rho\gamma}{|\calU|}\1(\s^{k} \!-\! \w^{k})^\top\right).
\end{align}
The naive computation of $\mathrm{prox}_{\gamma}^k$ is still computationally costly due to the multiplication of $|\calU|\times|\calU|$ and $|\calU| \times d$ matrices in $\calO(|\calU|^2d)$. Let us here define $\widehat{\bfU} \coloneqq \bfU + (\nicefrac{\rho\gamma}{|\calU|})\1(\s^k - \w^k)^\top$ for simplicity, and we can further rewrite \cref{eq:proximal_mapping} as follows:
\begin{align*}
\mathrm{prox}_{\gamma}^{k}(\bfU)
&= -\frac{1}{|\calU|^2\left(\frac{1}{|\calU|}+\frac{1}{\rho\gamma}\right)}  \cdot \1 \left(\widehat{\bfU}^\top\1\right)^\top + \widehat{\bfU}.
\end{align*}
Thus, we can perform this matrix multiplication efficiently by \textbf{(i)} computing each row of $\widehat{\bfU}$ in parallel (i.e., $\widehat{\bu}_i=\bu_i+(\nicefrac{\rho\gamma}{|\calU|})(\s^k-\w^k)$), \textbf{(ii)} computing the accumulated user embedding $\bt = \widehat{\bfU}^\top\1$, and then \textbf{(iii)} adding $-|\calU|^{-2}\left(\nicefrac{1}{|\calU|}+\nicefrac{1}{\rho\gamma}\right)^{-1}\cdot \bt$ to each row of $\widehat{\bfU}$. Thus, the cost of the matrix-matrix multiplication in \cref{eq:proximal_mapping} is reduced to $\calO(|\calU|d)$, which is more efficient than $\calO(|\calU|^2d)$ of the naive implementation. The computational efficiency is advantageous even when $\rho$ and $\gamma$ are fixed during optimization.

\subsubsection{Update of $\s$.}
We can perform the update of $\s$ by computing the following solution:
\begin{align*}
   \s^{k+1} &= \argmin_{\s} \left\{ \frac{\lambda_{ex}}{2}\norm{\bfV^{k+1}\s}_2^2 + \frac{\rho}{2}\norm{\frac{1}{|\calU|}(\bfU^{k+1})^\top\1 - \s + \w^{k}}_2^2\right\} \\
   &= \rho\left(\lambda_{ex} \bfG_V^{k+1} + \rho\I\right)^{-1}\left(\frac{1}{|\calU|}(\bfU^{k+1})^\top \1 + \w^{k}\right).
\end{align*}
We can reuse the Gramian $\bfG_V^{k+1}$ for this step following its pre-computation in the $\bfU$ step. The computation cost here is thus $\calO(|\calU|d + d^3)$, which includes \textbf{(i)} the computation of $(1/|\calU|)(\bfU^{k+1})^\top \1$ and \textbf{(ii)} the solution of a linear system of size $d^2$.

\subsection{Complexity Analysis}
\label{section:complexity-analysis}
\cref{alg:exADMM} shows the detailed implementation of exADMM. First, the user and item embeddings are initialized with independent normal noise with a $\sigma/\sqrt{d}$ standard deviation~\cite{rendle2021revisiting}. In line 10 of \cref{alg:exADMM}, we pre-compute $\bfG_V^{k}=\sum_{j \in \calV}\bv_j^{k}(\bv_j^{k})^\top$, which can be done in parallel for each item and be reused in the update of $\bfU$ and $\s$. The averaged user vector $\bt = (1/|\calU|)(\bfU)^\top \1$ can be reused for the $\s$ and $\w$ steps, hence, we compute this in line 20.
Consequently, the computational costs for updating $\bfV$, $\bfU$, $\s$, and $\w$ are
\textbf{(i)} $\calO(\mathrm{nz}(\bR)d^2 + |\calV|d^3)$,
\textbf{(ii)} $\calO(\mathrm{nz}(\bR)d^2 + |\calU|d^2)$,
\textbf{(iii)} $\calO(|\calU|d + d^3)$,
and \textbf{(iv)} $\calO(d)$, respectively.
Therefore, the overall cost is $\calO(\mathrm{nz}(\bR)d^2 + |\calU|d^2 + |\calV|d^3)$, which is indeed even lower than $\calO(\mathrm{nz}(\bR)d^2 + (|\calU| + |\calV|)d^3)$ of iALS. This is because we avoid solving the linear system when updating $\bfU$ by applying the proximal gradient method with the efficient $\mathrm{prox}_{\gamma}^{k}$.

\begin{algorithm}[t]
  \caption{exADMM}
  \small
  \label{alg:exADMM}
  \begin{algorithmic}[1]
    \REQUIRE{Implicit feedback matrix $\bR$}
    \STATE $\forall i \in \calU, \bu_{i}^{0} \sim \mathcal{N}(0, (\sigma/\sqrt{d})\I)$, $\forall j \in \calV, \bv_{j}^{0} \sim \mathcal{N}(0, (\sigma/\sqrt{d})\I)$, 
    \STATE $\s^{0} \gets (1/|\calU|)(\bfU^{0})^\top\1$, $\w^{0} \gets \vec{0}$
    \FOR{$k=0,\dots,T-1$}
      \STATE $\bfG_U^{k} \gets \sum_{i} \bu_i^{k}{(\bu_i^{k})}^\top$, $\bfG_s^{k} \gets \s^{k}{(\s^{k})}^\top$  \hfill // \textit{$\calO(|\calU|d^2)$ and $\calO(d^2)$}
      \FOR{$j=1,\dots,|\calV|$} \hfill // \textit{\textbf{parallelizable loop}}
        \STATE $\bfG_{j}^{k} \gets \sum_{i} r_{i,j} \bu_i^{k}{(\bu_i^{k})}^\top$ \hfill // \textit{$\calO((\mathrm{nz}(\bR)/|\calV|)d^2)$}
        \STATE $\bv_j^{k+1} \gets \left(\bfG_j^{k} +\alpha_0\bfG_U^{k} + \lambda_{ex} \bfG_s^{k} + \lambda_V^{(j)}\I\right)^{-1} \sum_{i} r_{i,j}\bu_i^{k}$ \hfill// \textit{$\calO(d^3)$}
      \ENDFOR
      \STATE $\bfG_V^{k+1} \gets \sum_{j} \bv_j^{k+1}{(\bv_j^{k+1})}^\top$  \hfill // \textit{$\calO(|\calV|d^2)$}
      \FOR{$i=1,\dots,|\calU|$} \hfill // \textit{\textbf{parallelizable loop}}
        \STATE $\bfG_i^{k+1} \gets \sum_{j} r_{i,j} \bv_j^{k+1}{(\bv_j^{k+1})}^\top$ \hfill // \textit{$\calO((\mathrm{nz}(\bR)/|\calU|)d^2)$}
        \STATE $\nabla_{\bu_i}g^{k+1} \gets \left(\bfG_i^{k+1} + \alpha_0\bfG_V^{k+1} + \lambda_U^{(i)} \I\right)\bu_i^{k} - \sum_{j}r_{i,j}\bv_j^{k+1}$ \hfill // \textit{$\calO(d^2)$}
        %\STATE $\widetilde{\bu}_{i}^{k+1} \gets \bu_{i}^{k} - \gamma\nabla_{\bu_i}g^{k+1} + \frac{\rho\gamma}{|\calU|}(\s^{k} - \w^{k})$ \hfill // \textit{$\calO(d)$}
        \STATE $\widehat{\bu}_{i}^{k+1} \gets \bu_{i}^{k} - \gamma\nabla_{\bu_i}g^{k+1} + \frac{\rho\gamma}{|\calU|}(\s^{k} - \w^{k})$ \hfill // \textit{$\calO(d)$}
      \ENDFOR
      \STATE $\bt \gets \sum_{i}\widehat{\bu}_{i}^{k+1}$ \hfill // \textit{$\calO(|\calU|d)$}
      \FOR{$i=1,\dots,|\calU|$} \hfill // \textit{\textbf{parallelizable loop}}
        \STATE $\bu_{i}^{k+1} \gets \widehat{\bu}_{i}^{k+1} -|\calU|^{-2}\left(\nicefrac{1}{|\calU|}+\nicefrac{1}{\rho\gamma}\right)^{-1}\bt$ \hfill // \textit{$\calO(d)$}
      \ENDFOR
       \STATE $\bt \gets \frac{1}{|\calU|}\sum_{i}\bu_{i}^{k+1}$ \hfill // \textit{$\calO(|\calU|d)$}
      \STATE $\s^{k+1} \gets \rho\left(\lambda_{ex} \bfG_V^{k+1} + \rho\I\right)^{-1}\left(\bt - \w^{k}\right)$ \hfill // \textit{$\calO(d^3)$}
      \STATE $\w^{k+1} \gets \w^{k} + \bt - \s^{k+1}$ \hfill // \textit{$\calO(d)$}
    \ENDFOR
    \RETURN $\bfU, \bfV$
  \end{algorithmic}
\end{algorithm}

\subsection{Convergence Analysis}
The objective defined in \cref{eq:exADMM-obj} has more than two variables (i.e., three-block optimization), which are \emph{coupled} (e.g., $\bfU, \bfV$ in the iALS loss function). However, multi-block ADMM does not retain a convergence guarantee in general~\cite{chen2016direct}. Various algorithms have been developed for optimization separability and provable convergence under coupled variables~\cite{wang2014parallel,deng2017parallel,liu2019linearized}. For instance, \citet{liu2019linearized} proposed a variant of ADMM for non-convex problems, which completely decouples variables by introducing linear approximation when updating \emph{all} the coupled ones, thereby enabling parallel gradient descent. By contrast, exADMM applies linearization only to the $\bfU$ step and works in an alternate way. This strategy enables second-order acceleration in the update of $\bfV$ and $\s$, whereas this partial linearization might impair convergence at first glance. Nonetheless, the following provides a convergence guarantee for exADMM, which is our main theoretical contribution.\footnote{The proofs of the theorem and related lemmas are provided in Appendix~\ref{sec:proof-convergence}.}
\begin{theorem}
\label{thm:exADMM}
Assume that there exist constants $C_V, C_U, C_{\s} > 0$ such that $\|\bfV^{k}\|_F^2\leq C_V$, $\|\bfU^{k}\|_F^2\leq C_U$, $\|\s^{k}\|_2^2\leq C_{s}$ for $\forall k\geq 0$.
For $\rho \geq \max\left(\frac{24\lambda_{ex}^2 C_V C_{\s}}{\underline{\lambda}_V}, \frac{1}{2}\!+\!\sqrt{\frac{1}{4}\!+\!6\lambda_{ex}^2 C_V^2}\right)$ and $\gamma \leq \frac{1}{\sqrt{|\calU|}((1+\alpha_0)C_V+\bar{\lambda}_U) + 1}$, where $\bar{\lambda}_U \coloneqq \max_{i \in \calU}\lambda_U^{(i)}$ and $\underline{\lambda}_V \coloneqq \min_{j \in \calV}\lambda_V^{(j)}$, the augmented Lagrangian $L_{\rho}(\bfV^{k},\bfU^{k},\s^{k},\w^{k})$ converges to some value, and residual norms $\|\bfV^{k+1}-\bfV^{k}\|_F,\|\bfU^{k+1}-\bfU^{k}\|_F,\|\s^{k+1}-\s^{k}\|_2$, and $\|\w^{k+1}-\w^{k}\|_2$ converge to $0$.
Furthermore, the gradients of $L_{\rho}$ with respect to $\bfV$, $\bfU$, $\s$, and $\w$ converge to $0$.
\end{theorem}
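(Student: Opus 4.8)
The plan is to use the augmented Lagrangian $L_\rho$ as a Lyapunov function and run the standard sufficient-decrease argument for nonconvex ADMM, adapted to the fact that only the $\bfU$-block is linearized. I would prove three facts in sequence: (i) $L_\rho(\bfV^k,\bfU^k,\s^k,\w^k)$ decreases by at least a positive multiple of $\norm{\bfV^{k+1}-\bfV^k}_F^2+\norm{\bfU^{k+1}-\bfU^k}_F^2+\norm{\s^{k+1}-\s^k}_2^2$ at every step; (ii) $L_\rho$ is bounded below along the (bounded) iterate sequence; and (iii) these force $L_\rho^k$ to converge and all successive differences, including the dual one, to vanish. The gradient statement then follows by reading off the per-block optimality conditions and sending the differences to zero.

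The enabling observation, which I would establish first, is that the $\s$-update optimality condition collapses to the identity $\w^{k+1}=(\lambda_{ex}/\rho)\,\bfG_V^{k+1}\s^{k+1}$: at the minimizer $\rho(\tfrac{1}{|\calU|}(\bfU^{k+1})^\top\1-\s^{k+1}+\w^k)=\rho\w^{k+1}$ equals $\lambda_{ex}\bfG_V^{k+1}\s^{k+1}$. This writes the dual variable purely through the primal blocks $\bfV$ and $\s$, so $\norm{\w^{k+1}-\w^k}_2$ is controlled by $\norm{\s^{k+1}-\s^k}_2$ and $\norm{\bfV^{k+1}-\bfV^k}_F$ via $\norm{\bfG_V^{k+1}}_2\le C_V$, $\norm{\bfG_V^{k+1}-\bfG_V^k}_2\le 2\sqrt{C_V}\,\norm{\bfV^{k+1}-\bfV^k}_F$, and $\norm{\s^k}_2^2\le C_s$. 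For the decrease lemma itself I would then bound each block separately. The $\bfV$- and $\s$-subproblems are minimized exactly and are strongly convex: the $\bfV$ Hessian blocks are dominated below by $\underline{\lambda}_V\I$ (the Tikhonov term) and the $\s$ Hessian is $\lambda_{ex}\bfG_V^{k+1}+\rho\I\succeq\rho\I$, giving decreases of at least $\tfrac{\underline{\lambda}_V}{2}\norm{\bfV^{k+1}-\bfV^k}_F^2$ and $\tfrac{\rho}{2}\norm{\s^{k+1}-\s^k}_2^2$. For the linearized $\bfU$-step I would invoke the descent lemma: $\nabla_{\bfU}g$ is Lipschitz with constant $L_g$ at most $(1+\alpha_0)C_V+\bar{\lambda}_U$ (bounding each block $\sum_j r_{i,j}\bv_j\bv_j^\top+\alpha_0\bfG_V+\lambda_U^{(i)}\I$ in spectral norm), so the proximal-gradient step yields a decrease of at least $(\tfrac{1}{2\gamma}-\tfrac{L_g}{2})\norm{\bfU^{k+1}-\bfU^k}_F^2$, with the ADMM penalty cancelling exactly because it appears identically in the surrogate and in $L_\rho$; the hypothesis on $\gamma$ is exactly what keeps this coefficient positive (the $\sqrt{|\calU|}$ factor and additive $1$ providing slack for the coupling through $\tfrac{1}{|\calU|}\1$).

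The only increase comes from the dual ascent, equal to $\rho\norm{\w^{k+1}-\w^k}_2^2$; substituting the bound from the $\s$-identity, this is at most $\tfrac{c\,\lambda_{ex}^2}{\rho}\big(C_V^2\norm{\s^{k+1}-\s^k}_2^2+C_VC_s\norm{\bfV^{k+1}-\bfV^k}_F^2\big)$. The two lower bounds on $\rho$ are then precisely what is needed to absorb it: $\rho\ge 24\lambda_{ex}^2C_VC_s/\underline{\lambda}_V$ keeps the net $\norm{\bfV^{k+1}-\bfV^k}_F^2$-coefficient positive, while the root of $\rho^2-\rho-6\lambda_{ex}^2C_V^2\ge 0$, i.e. $\rho\ge\tfrac12+\sqrt{\tfrac14+6\lambda_{ex}^2C_V^2}$, keeps the net $\norm{\s^{k+1}-\s^k}_2^2$-coefficient positive. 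With the decrease lemma established, the rest is routine: boundedness below follows from the iterate bounds $C_V,C_U,C_s$ after using $\w^{k+1}=(\lambda_{ex}/\rho)\bfG_V^{k+1}\s^{k+1}$ to control the $-\tfrac{\rho}{2}\norm{\w}_2^2$ term; monotonicity plus a lower bound gives convergence of $L_\rho^k$; and telescoping yields $\sum_k(\norm{\bfV^{k+1}-\bfV^k}_F^2+\norm{\bfU^{k+1}-\bfU^k}_F^2+\norm{\s^{k+1}-\s^k}_2^2)<\infty$, so all three primal differences tend to $0$ and then $\norm{\w^{k+1}-\w^k}_2\to0$ by the dual bound. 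Finally I would recover the vanishing gradients from stationarity: $\nabla_{\bfV}L_\rho$ and $\nabla_{\s}L_\rho$ vanish at the exact minimizers up to an error equal to a successive difference, $\nabla_{\bfU}L_\rho$ equals $\tfrac{1}{\gamma}(\bfU^k-\bfU^{k+1})+(\nabla_{\bfU}g^{k+1}-\nabla_{\bfU}g^k)$ by the proximal optimality condition, and $\nabla_{\w}L_\rho$ is exactly the residual $\w^{k+1}-\w^k$; all four then go to $0$.

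I expect the sufficient-decrease lemma, and specifically dominating the dual-ascent increase $\rho\norm{\w^{k+1}-\w^k}_2^2$, to be the main obstacle, since the $\w$-step is the one update that raises $L_\rho$. The whole argument hinges on the $\s$-optimality identity, which is what lets the primal decreases (scaled by $\underline{\lambda}_V$ and $\rho$) overwhelm the dual increase once $\rho$ is taken large enough; the partial linearization only complicates the bookkeeping for the $\bfU$-block, where the descent-lemma constant $L_g$ must be matched against $1/\gamma$.
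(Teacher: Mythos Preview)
Your proposal is correct and follows essentially the same route as the paper: the same three-way decomposition of $L_\rho^{k+1}-L_\rho^k$, the same key identity $\w^{k+1}=(\lambda_{ex}/\rho)\,\bfG_V^{k+1}\s^{k+1}$ from $\s$-optimality to bound the dual ascent, strong convexity for the exact $\bfV$- and $\s$-blocks, the descent lemma for the linearized $\bfU$-block, and the same reading of the per-block optimality conditions for the gradient statement. Two minor remarks: the $\sqrt{|\calU|}$ in the paper's hypothesis on $\gamma$ comes from their (loose) Frobenius-norm Lipschitz estimate for $\nabla_{\bfU}g$, not from the ADMM coupling (your tighter spectral bound is fine under the stated hypothesis), and in your expression for $\nabla_{\bfU}L_\rho$ you should also carry the $\tfrac{\rho}{|\calU|}\1[(\w^{k+1}-\w^k)-(\s^{k+1}-\s^k)]^\top$ term from the penalty, which likewise vanishes.
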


\cref{thm:exADMM} illustrates that the sequence $\{\bfU^k, \bfV^k, \s^k, \w^k\}$ will converge to the feasible set, in which $\s=(1/|\calU|)\bfU^\top\1$ in \cref{eq:exADMM-obj} holds. Moreover, the derivative of the augmented Lagrangian with respect to the primal variables (i.e., $\bfV$, $\bfU$, and $\s$) will converge to zero, which implies that the limit points of $\{\bfU^k, \bfV^k, \s^k, \w^k\}$ should be the saddle points, i.e., the KKT points of \cref{eq:exADMM-obj} if there exist. Notably, the above convergence relies on the fact that the objective is strongly convex with respect to each variable when the other variables are fixed. This property is inherited from iALS, and therefore exADMM takes advantage of iALS in both scalability and convergence guarantee while enabling flexible control of accuracy-fairness tradeoff via a (seemingly) hard-to-optimize regularizer $R_{ex} (\bfV,\bfU)$.

\begin{table}
  \caption{Statistics of the datasets.}
  \vspace{-3mm}
  \label{table:statistics}
  \scalebox{1.05}{
  \begin{tabular}{l|rrrr}
    \toprule
    \multicolumn{1}{l|}{Dataset} & \multicolumn{1}{r}{\# of Users} & \multicolumn{1}{r}{\# of Items} & \multicolumn{1}{r}{\# of Interactions} \\ \midrule
    \ML       & $136{,}677$ & $20{,}108$  & $10$M  \\
    \MSD          & $571{,}355$ & $41{,}140$ & $33.6$M \\
    \Epinions     & $6{,}287$   & $3{,}999$  & $0.13$M \\ \bottomrule
  \end{tabular}}
\end{table}
\begin{table*}
  \caption{Comparison of the scalability (time and space complexity) of the compared methods}
  \vspace{-3mm}
  \label{table:complexity}
  \scalebox{0.9}{
  \begin{tabular}{l|cccc}
  \toprule
    \multicolumn{1}{l|}{Methods} &  \multicolumn{1}{c}{Time Complexity} & \multicolumn{1}{c}{Space Complexity} & \multicolumn{1}{c}{Is it parallelizable across users?} & \multicolumn{1}{c}{Is it parallelizable across items?} \\ \midrule
    iALS          & $\calO\left(\mathrm{nz}(\bR)d^2 + (|\calU|+|\calV|)d^3\right)$ & $\calO\left((|\calU|+|\calV|)d\right)$   & \tick & \tick  \\
    Mult-VAE     & $\calO\left(|\calU||\calV|d\right)$ & $\calO\left(|\calV|d\right)$  & \tick  & \fail  \\
    FairRec       & $\calO\left(|\calU||\calV|(d+\log K)\right)$ & $\calO\left(|\calU||\calV|\right)$   &  \fail &  \fail \\
    Multi-FR     & $\calO\left(\mathrm{nz}(\bR)|\calV| + |\calU||\calV|d\right)$ & $\calO\left(|\calV|d\right)$   & \tick  & \fail  \\
    exADMM (ours) & $\calO\left(\mathrm{nz}(\bR)d^2 + |\calV|d^3\right)$ &  $\calO\left((|\calU|+|\calV|)d\right)$  & \tick  &  \tick \\ \bottomrule
  \end{tabular}}
\end{table*}

\section{Empirical Evaluation}
This section empirically compares exADMM with iALS and existing fair recommendation methods regarding their effectiveness in accuracy-fairness control and scalability. Our experiment code is available at \url{https://github.com/riktor/exADMM-recommender}.

\subsection{Experiment Design}
\paragraph{Datasets}
Our experiments use MovieLens 20M (\ML)~\cite{harper2015movielens}, Million Song Dataset (\MSD)~\cite{bertin2011million}, and the Epinions dataset (\Epinions)~\cite{massa2007trust}. Following the standard protocol to evaluate recommendation effectiveness~\cite{liang2018variational,rendle2021revisiting}, we generate implicit feedback datasets by binarizing the raw explicit feedback data by keeping interactions with ratings of four or five for \ML and \Epinions. For \MSD, we use all the recorded interactions as implicit feedback. Note that, for \Epinions, we only retain users and items with more than 20 interactions following conventional studies~\cite{abdollahpouri2017controlling,abdollahpouri2019managing}. \cref{table:statistics} shows the statistics of the resulting implicit feedback datasets.

Our experiments follow a strong generalization setting where we use all interactions of 80\% of the users for training and consider the remaining two sets of 10\% of the users as holdout splits. In the validation and testing phases, each model predicts the preference scores of all items for each user on the validation and test sets.

\paragraph{Compared Methods}
Since our aim is to develop a scalable method to enable accuracy-fairness control, we compare our method against an efficient but unfair method and fair but inefficient methods. Specifically, we first include the vanilla version of iALS as a scalable baseline, which does not consider item fairness. Therefore, against this baseline, our aim is to achieve better accuracy-fairness control and similar scalability. In addition to iALS, we include Mult-VAE~\cite{liang2018variational}, a state-of-the-art deep recommendation method, as a reference to provide the best achievable recommendation accuracy on each dataset. Besides, we consider fairness-aware recommendation methods such as the MF-based in-processing method called \emph{Multi-FR}~\citep{wu2022multi}. We also consider a post-processing method called FairRec~\cite{patro2020fairrec} combined with the vanilla iALS algorithm to pre-train user-item preference matrix. We thus call this baseline iALS+FairRec. Against these fairness-aware baselines, our aim is to achieve a similarly flexible and effective accuracy-fairness control by our algorithm, which is much more scalable and computationally efficient. \cref{table:complexity} summarizes the time complexity, space complexity, and parallelizablity of each method where we can see that Mult-VAE, Multi-FR, and FairRec are particularly not scalable because their complexity depends on the problematic factor of $|\calU||\calV|$.

Throughout the experiments, we train iALS and exADMM for $T=50$ training epochs, Mult-VAE for $T=200$ epochs, and Multi-FR for $T=500$ epochs with a constant standard deviation $\sigma=0.1$ for initialization. We tune $\{\lambda_{L2}, \alpha_0\}$ for iALS and iALS+FairRec, $\{\lambda_{L2}, \alpha_0, \lambda_{ex}, \rho, \gamma\}$ for exADMM, and $\{p, \tau, \rho_d\}$ for Multi-FR (where $p, \tau, \rho_d$ are the user patience, temperature of smooth rank functions, and dropout rate). To ensure that $\lambda_{ex}$ and $\rho$ are scale independent of $|\calU|$, we reparametrize $\lambda_{ex}$ and $\rho$ by $\lambda_{ex} = \lambda_{ex}^* \cdot |\calU|^2$ and $\rho = \rho^* \cdot |\calU|^2$, respectively, and tune $\lambda_{ex}^*$ and $\rho^*$ instead of the original ones. We implement iALS, iALS+FairRec, and exADMM based on the efficient C++ implementation provided by \citet{rendle2021revisiting}\footnote{\url{https://github.com/google-research/google-research/tree/master/ials}}. For a fair comparison, we use frequency-based re-scaling of $\bLambda_U$ and $\bLambda_V$~\cite{rendle2021revisiting} for both iALS and exADMM. Note that we adopt Denoising Auto-Encoder~\cite{liang2018variational} as a backbone model of Multi-FR because it can make predictions for the holdout users without costly SGD iterations in the testing phase.

\begin{figure*}[t]
    \centering
    \includegraphics[keepaspectratio, width=0.95\linewidth]{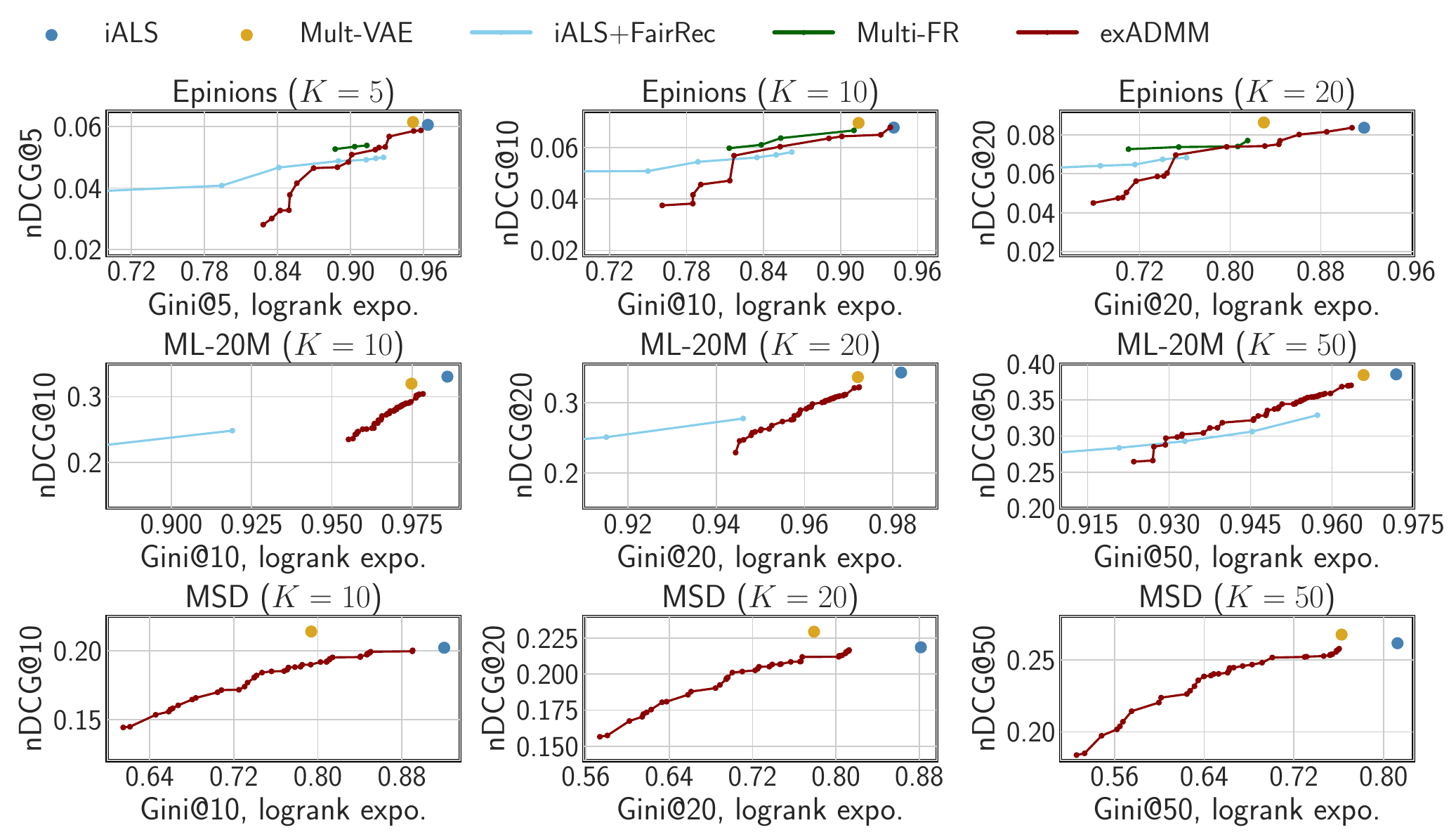}
    \caption{Tradeoff between recommendation accuracy (nDCG@K) and exposure equality (Gini@K) achieved by each method.}
    \label{fig:tradeoff-quality-vs-exposure}
\end{figure*}

\begin{figure}
   \centering
    \includegraphics[keepaspectratio, width=0.95\linewidth]{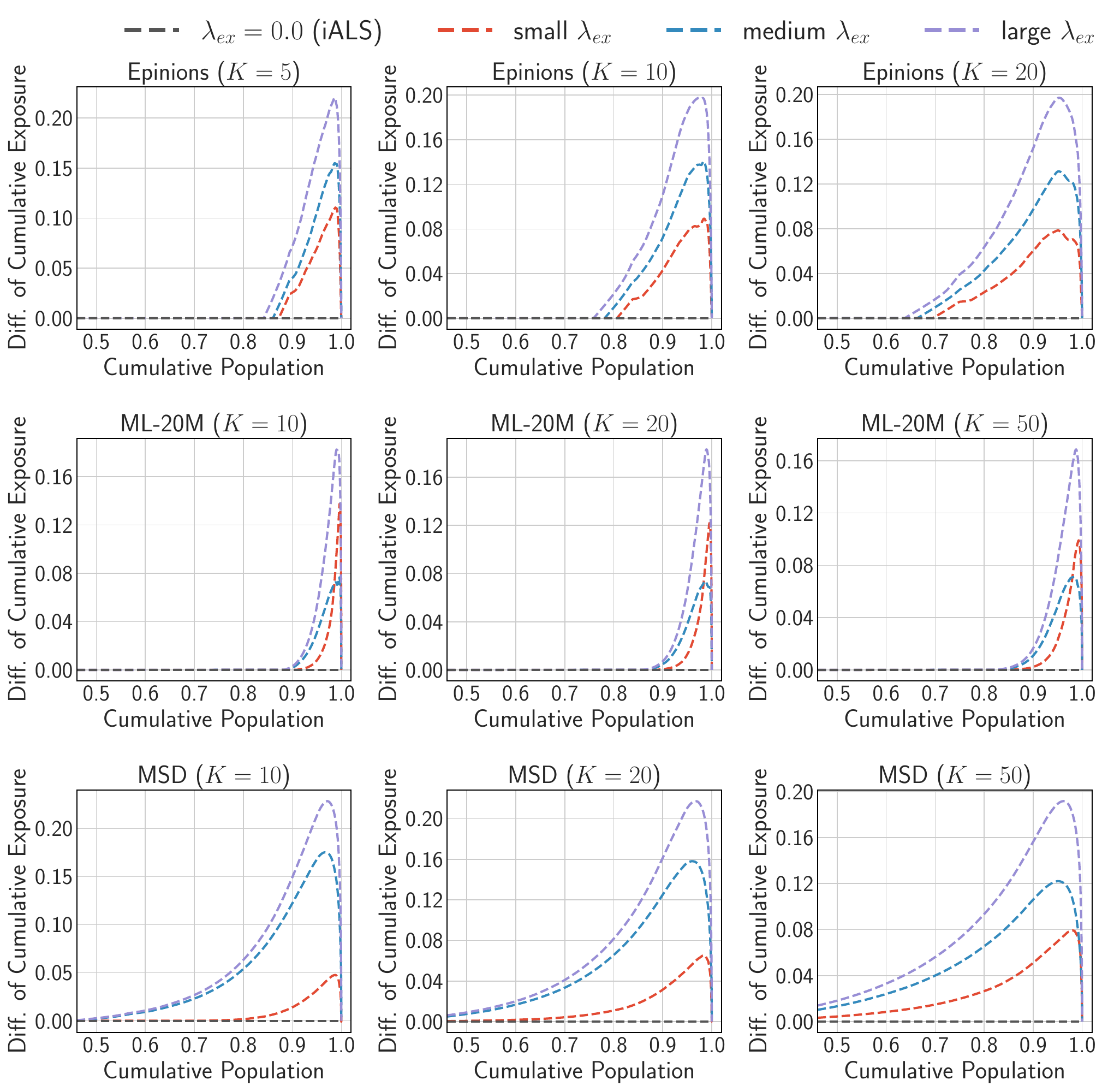}
    \caption{Distribution of item exposure achieved by our method with different hyperparameter ($\lambda_{ex}$) settings.}
    \vspace{-7mm}
    \label{fig:exposure-distribution}
\end{figure}

\paragraph{Evaluation Metrics}
We use the normalized cumulative gain (nDCG) as the measure of recommendation accuracy. To formally define this accuracy metric, let $\calV_i \subset \calV$ be the held-out items that user $i$ interacts with and $\pi_i(k) \in \calV$ be the $k$-th item in the ranked list to be evaluated for $i$. Then, we can define nDCG@$K$ as
\begin{align}
    \mathrm{nDCG@}K(i, \pi_i) & \coloneqq \frac{\mathrm{DCG@}K(i, \pi_i)}{\mathrm{DCG@}K(i, \pi_i^*)}, \\
    \text{where} & \qquad \mathrm{DCG@}K(i, \pi_i) \coloneqq \sum_{k=1}^{K} \frac{\ind{\pi_i(k) \in \calV_i}}{\log_2(k+1)} \notag,
\end{align}
and $\pi_i^*$ is an ideal ranking for user $i$.

In addition to the accuracy metric (DCG@$K$), we use Gini@$K$ to measure the exposure inequality based on the Gini index~\cite{atkinson1970measurement}, which is widely used to evaluate exposure inequality in related research~\cite{do2022optimizing,wu2022multi}. Specifically, Gini@$K$ is defined as follows:
\begin{align}
  \label{eq:gini-index}
  \text{Gini@}K(\bo) \coloneqq \frac{1}{2\norm{\bo}_1|\calV|^2}\sum_{j \in \calV}\sum_{l \in \calV}|o_j - o_l|,
\end{align}
where $\bo \in \R^{|\calV|}$ is an $|\calV|$-dimensional vector, whose $j$-th element $o_j$ indicates the total exposure given to item $j$, i.e., $o_j \coloneqq \sum_{i}o(i,j;\pi_i)$. In our experiments, we define $o(i,j; \pi_i) \coloneqq \ind{\pi_i^{-1}(j) \leq K}/\log_2(\pi_i^{-1}(j)+1)$ following the examination model of the DCG metric. Note that a lower value of Gini@K indicates that recommendations are more fair towards the items, but to do so without sacrificing accuracy and scalability is particularly challenging.

\subsection{Results and Discussion}

\paragraph{Accuracy-Fairness Tradeoff}
\label{section:expriment-quality-vs-equality}
First, we evaluate and compare how well each method can control the tradeoff between recommendation accuracy (nDCG@K) and exposure fairness (Gini@K) in \cref{fig:tradeoff-quality-vs-exposure} on the three datasets and three different values of $K$. In the figures, we present the Pareto frontier for exADMM, Multi-FR, and iALS+FairRec across various hyperparameter settings obtained through grid search.\footnote{Hyperparameters range from $\lambda_{L2} \in [1\mathrm{e}{-4}, 1.0]$, $\alpha_0 \in [1\mathrm{e}{-4}, 1.0]$, $\lambda_{ex}^* \in [1\mathrm{e}{-10}, 1\mathrm{e}{-4}]$, $\rho^* \in [1\mathrm{e}{-10}, 1\mathrm{e}{-4}]$, $\gamma=\{0.001, 0.01, 0.05\}$, $\beta \in [0.1, 1.0]$, and $p \in [0.6, 1.0], \tau \in [1\mathrm{e}{-3}, 1.0], \rho_d=[0.1, 0.9]$. For iALS+FairRec, we first select the best setting of $\alpha_0$ and $\lambda_{L2}$ in iALS in terms of nDCG@K and then tune the hyperparameters of FairRec, namely, the length of rankings $K$ and the scale $l \in (0, 1]$ of the minimum allocation constraint for each item $l \cdot (K \cdot |\calU|) / |\calV|$. We search each parameter in a logarithmic scale unless otherwise noted. For all methods, we use $d=32$ for \Epinions, $d=256$ for \ML, and $d=512$ for \MSD.} Note that neither iALS nor Mult-VAE incorporates a fairness regularizer or a specific hyperparameter for managing the accuracy-fairness tradeoff in their objectives. Therefore, we adjust their hyperparameters based on nDCG@K in the validation split and then report nDCG@K and Gini@K in the test split, represented as (blue and yellow) dots. This is why these methods achieve better accuracy than fairness-aware methods including exADMM, but it is also true that they produce substantial unfairness among items according to their Gini@K. It should also be noted that Multi-FR is infeasible on \ML and \MSD, and iALS+FairRec is infeasible on \MSD. This is due to their excessive time and space complexity, which depend on $|\calU||\calV|$. As a result, their results are not shown in the figures corresponding to these datasets.

\begin{table*}[h]
  \caption{Computation time (seconds) to complete training and inference of the compared methods. \textbf{\textcolor{dkred}{NA}} means that the method was infeasible on the corresponding dataset.}
  \vspace{-3mm}
  \label{table:elapsed}
  \scalebox{0.9}{
  \begin{tabular}{l|cc|cc|cc|cc|cc|cc}
  \toprule
     &  \multicolumn{4}{c|}{\Epinions} & \multicolumn{4}{c|}{\ML} & \multicolumn{4}{c}{\MSD} \\ \midrule
     &  \multicolumn{2}{c|}{$K=5$} & \multicolumn{2}{c|}{$K=20$} & \multicolumn{2}{c|}{$K=10$} & \multicolumn{2}{c|}{$K=50$} & \multicolumn{2}{c|}{$K=10$} & \multicolumn{2}{c}{$K=50$} \\ \midrule
    Methods       &  Train      &  Inference &  Train    &  Inference &  Train  &  Inference &  Train  &  Inference &  Train  &  Inference  &  Train  &  Inference     \\ \midrule\midrule
    iALS          &  0.7       &  0.1      &  0.7     &     0.1   &   67.2      &    4.2       &   67.2     &     4.2       &  $857.6$     &     39.4        &    $857.6$      &  39.4  \\
    Mult-VAE      &  72.6         &  0.5       &  72.6        &     0.5      &   $1{,}872.6$    &    14.7      &   $1{,}872.6$   &   14.7        &  $22{,}580.1$     &    145.2      &   $22{,}580.1$    & 145.2 \\
    FairRec       &  0.7          &  0.3         &  0.7        &    0.3      &   67.2      &    23.3    &   67.2     &   30.1       &  $857.6$     &     \textbf{\textcolor{dkred}{NA}}       &     $857.6$     &  \textbf{\textcolor{dkred}{NA}} \\
    Multi-FR      &  $4{,}011.3$  &  0.5       &  $4{,}011.3$&    0.5     &  \textbf{\textcolor{dkred}{NA}}       &   \textbf{\textcolor{dkred}{NA}}      &   \textbf{\textcolor{dkred}{NA}}    &   \textbf{\textcolor{dkred}{NA}}       &   \textbf{\textcolor{dkred}{NA}}     &    \textbf{\textcolor{dkred}{NA}}     &    \textbf{\textcolor{dkred}{NA}}    &  \textbf{\textcolor{dkred}{NA}} \\
    exADMM (ours) &  1.7       &  0.4      &  1.7     &    0.4    &   98.0     &    9.5      &   98.0    &   9.5        &  $1{,}533.1$   &     89.6        &   $1{,}533.1$     &  89.6   \\ \bottomrule
  \end{tabular}}
\end{table*}

From the figures, we can first see that exADMM achieves a similar accuracy-fairness tradeoff compared to Multi-FR on \Epinions even though exADMM is much more scalable. Next, the comparisons between exADMM and iALS+FairRec (which has much larger time and space complexity compared to exADMM) are complicated, and it is not straightforward to determine which method performs better in terms of accuracy-fairness tradeoff. However, an interesting trend emerges from our empirical results on \Epinions and \ML (FairRec is infeasible on \MSD). That is, exADMM is likely to achieve a better accuracy-fairness tradeoff than iALS+FairRec when we need to achieve high recommendation accuracy, while iALS+FairRec is likely to perform better in terms of the tradeoff when enforcing a strong fairness requirement. This interesting difference can be attributed to the fact that exADMM is an \textit{in-processing} method while FairRec is a \textit{post-processing} method. Overall, it is remarkable that exADMM, which is much more scalable, achieves a competitive effectiveness in terms of accuracy-fairness control compared to Multi-FR and iALS+FairRec, which do not consider scalability.

To visualize how flexibly exADMM can control exposure distribution via its hyperparameter ($\lambda_{ex}$), \cref{fig:exposure-distribution} illustrates the cumulative item exposure (called the Lorenz curve) induced by exADMM with several different values of $\lambda_{ex}$ on \Epinions (top row), \ML (mid row), and \MSD (bottom row). Each curve in the figures shows the cumulative item exposure relative to the case when $\lambda_{ex} = 0.0$ (i.e., the vanilla version of iALS), and this is why $\lambda_{ex} = 0.0$ (iALS) always has flat lines. In addition to the curve for $\lambda_{ex} = 0.0$, we present curves induced by three other values of $\lambda_{ex}$ (small, medium, and large) for each dataset. Specifically, we use $\lambda_{ex} = 2\mathrm{e}{-4}$ (small), $\lambda_{ex}=3\mathrm{e}{-3}$ (medium), and $\lambda_{ex}=9\mathrm{e}{-2}$ (large) for \Epinions, $\lambda_{ex} = 1\mathrm{e}{-10}$ (small), $\lambda_{ex}=1\mathrm{e}{-7}$ (medium), and $\lambda_{ex}=3\mathrm{e}{-6}$ (large) for \ML, and $\lambda_{ex} = 1\mathrm{e}{-12}$ (small), $\lambda_{ex}=1\mathrm{e}{-10}$ (medium), and $\lambda_{ex}=1\mathrm{e}{-3}$ (large) for \MSD.
\cref{fig:exposure-distribution} demonstrates that exADMM clearly achieves fairer exposure distribution compared to iALS ($\lambda_{ex}=0.0$). We can also see that we can flexibly and accurately control the item exposure distribution via the hyperparameter $\lambda_{ex}$ of our method. That is, we can see a monotonic relationship between $\lambda_{ex}$ and fairness of exposure distribution, suggesting that $\lambda_{ex}$ of exADMM works as an appropriate controller of item fairness.

\paragraph{Computational Complexity}
Finally, we empirically evaluate the computational efficiency of the methods in terms of both training and inference. \cref{table:elapsed} reports the average elapsed time to complete training and inference of each method on the three datasets and two values of $K$. Note that the ``\textbf{\textcolor{dkred}{NA}}'' values that we see for Multi-FR and FairRec indicate that their training and/or inference are infeasible.

From the table, it is evident that exADMM achieves computational efficiency close to iALS across all datasets and all values of $K$, which implies that our method is able to control the accuracy-fairness tradeoff while retaining scalability. We also observe that Mult-VAE needs approximately 15-40 times longer training time compared to exADMM. This issue of Mult-VAE will be exacerbated as the item space grows, given its time complexity of $\calO\left(|\calU||\calV|)d\right)$. The training procedure of Multi-FR is about 2,350 times slower than that of exADMM on \Epinions. Besides, Multi-FR is infeasible on \ML and \MSD. FairRec suffers from longer inference time compared to iALS and exADMM, and on \MSD, its inference becomes infeasible. Therefore, even though Multi-FR and FairRec show their usefulness in terms of exposure control on datasets with a limited size, due to their inefficiency and large complexity (as in \cref{table:complexity}), they are impractical for most industry-scale systems, which can even be larger than \MSD. This demonstrates that exADMM is the first method that enables an effective control of accuracy-fairness tradeoff and is scalable to systems of practical size.

\section{Conclusion}
The feasibility of exposure-controllable recommendation is indispensable for solving accuracy-fairness tradeoff in practice, however, it has been disregarded in academic research. Therefore, this work develops a novel scalable method called exADMM, to enable flexible exposure control. Despite the technical difficulty in handling the exposure regularizer in parallel, exADMM achieves this while maintaining scalability with yet provable convergence guarantees. Empirical evaluations are promising and demonstrate that our method is the first to achieve flexible control of accuracy-fairness tradeoff in a scalable and computationally efficient way. Our work also raises several intriguing questions for future studies such as extensions to more refined fairness regularizers beyond the mere second moment, a scalable post-processing approach to control fairness, and a scalable control of impact-based fairness~\cite{saito2022fair}.
\bibliographystyle{ACM-Reference-Format}
\bibliography{main.bib}

\newpage
\appendix
\onecolumn
%\onecolumn

\section{Proofs of Convergence Guarantee}
\label{sec:proof-convergence}
\subsection{Proof of 3.1}
\begin{proof}[Proof of Theorem 3.1]
\begin{comment}
Let us define the augmented Lagrangian function as
\begin{align*}
L_\rho(\bfV, \bfU, \s, \w) &= g(\bfV,\bfU,\s) + \frac{\rho}{2}\|\w + \bfU^\top\1 - \s\|^2 - \frac{\rho}{2} \|\w\|^2.
\end{align*}
\end{comment}
In the proof, we use the following lemma on the smoothness of $g$:
\begin{lemma}
    \label{lem:g_smooth}
For any $\bfV,\bfV'\in \mathbb{R}^{|\calV| \times d}$, $\s,\s'\in \mathbb{R}^d$, and $\bfU,\bfU'\in \mathbb{R}^{|\calU| \times d}$, function $g$ satisfies the following inequalities:
\begin{align*}
    &\|\nabla_{\bfV}g(\bfV,\bfU,\s) - \nabla_{\bfV}g(\bfV',\bfU,\s)\|_F \leq \sqrt{|\calV|}\left(\left(1 + \alpha_0\right)\|\bfU\|_F^2 + \lambda_{ex}\|\s\|_2^2 + \bar{\lambda}_V\right)\|\bfV-\bfV'\|_F ,\\
    &\|\nabla_{\bfU} g(\bfV, \bfU, \s) - \nabla_{\bfU} g(\bfV, \bfU', \s')\|_F \leq \sqrt{|\calU|}\left((1 + \alpha_0)\|\bfV\|_F^2 + \bar{\lambda}_U\right)\|\bfU-\bfU'\|_F ,\\
    &\|\nabla_{\s}g(\bfV, \bfU, \s) - \nabla_{\s}g(\bfV, \bfU', \s')\|_2 \leq \lambda_{ex}\|\bfV\|_F^2\|\s-\s'\|_2 ,\\
    &\|\nabla_{\s}g(\bfV, \bfU, \s) - \nabla_{\s}g(\bfV', \bfU, \s)\|_2 \leq \lambda_{ex}(\|\bfV\|_F + \|\bfV'\|_F)\|\s\|_2\|\bfV-\bfV'\|_F,
\end{align*}
where $\bar{\lambda}_U \coloneqq \max_{i \in \calU}\lambda_U^{(i)}$ and $\bar{\lambda}_V \coloneqq \max_{j \in \calV}\lambda_V^{(j)}$.
\end{lemma}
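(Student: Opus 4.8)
The plan is to compute each partial gradient of $g$ in closed form and then read off an operator-norm Lipschitz constant for each claimed difference. The decoupling observed in the algorithm derivation is what makes this clean: the exposure term $(\lambda_{ex}/2)\norm{\bfV\s}_2^2$ does not involve $\bfU$ and $L$ does not involve $\s$, so $\nabla_{\bfU}g$ is independent of $\s$ and $\nabla_{\s}g$ is independent of $\bfU$. Working row-wise exactly as in the $\bfV$-, $\bfU$-, and $\s$-updates, I would record $\nabla_{\bv_j}g = M_j\bv_j - \sum_i r_{i,j}\bu_i$ with $M_j \coloneqq \sum_i r_{i,j}\bu_i\bu_i^\top + \alpha_0\bfG_U + \lambda_{ex}\s\s^\top + \lambda_V^{(j)}\I$, then $\nabla_{\bu_i}g = N_i\bu_i - \sum_j r_{i,j}\bv_j$ with $N_i \coloneqq \sum_j r_{i,j}\bv_j\bv_j^\top + \alpha_0\bfG_V + \lambda_U^{(i)}\I$, and finally $\nabla_{\s}g = \lambda_{ex}\bfG_V\s = \lambda_{ex}\bfV^\top\bfV\s$.

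For the first two inequalities the point is that $\bfV\mapsto\nabla_{\bfV}g$ and $\bfU\mapsto\nabla_{\bfU}g$ are affine with a block-diagonal linear part acting row-wise through $M_j$ and $N_i$. Hence $\nabla_{\bv_j}g(\bfV,\cdot)-\nabla_{\bv_j}g(\bfV',\cdot)=M_j(\bv_j-\bv_j')$, and the needed per-row constant $\norm{M_j}_{\mathrm{op}}$ follows from the triangle inequality on operator norms together with $\norm{\sum_i r_{i,j}\bu_i\bu_i^\top}_{\mathrm{op}}\leq\sum_i r_{i,j}\norm{\bu_i}_2^2\leq\norm{\bfU}_F^2$ (using $r_{i,j}\in\{0,1\}$), $\norm{\alpha_0\bfG_U}_{\mathrm{op}}=\alpha_0\norm{\bfU}_{\mathrm{op}}^2\leq\alpha_0\norm{\bfU}_F^2$, $\norm{\lambda_{ex}\s\s^\top}_{\mathrm{op}}=\lambda_{ex}\norm{\s}_2^2$, and $\lambda_V^{(j)}\leq\bar{\lambda}_V$. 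This yields $\norm{M_j}_{\mathrm{op}}\leq C$ with $C\coloneqq(1+\alpha_0)\norm{\bfU}_F^2+\lambda_{ex}\norm{\s}_2^2+\bar{\lambda}_V$. To recover the stated $\sqrt{|\calV|}$ factor I would bound each row deviation crudely as $\norm{M_j(\bv_j-\bv_j')}_2\leq C\norm{\bv_j-\bv_j'}_2\leq C\norm{\bfV-\bfV'}_F$ and then sum the squares over the $|\calV|$ rows, giving $\norm{\nabla_{\bfV}g(\bfV,\cdot)-\nabla_{\bfV}g(\bfV',\cdot)}_F^2\leq|\calV|C^2\norm{\bfV-\bfV'}_F^2$. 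The second inequality is identical after swapping $(\bfU,\calU)\leftrightarrow(\bfV,\calV)$, dropping the exposure contribution (so no $\lambda_{ex}\norm{\s}_2^2$ term survives) and using independence of $\s$.

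The third inequality is immediate: since $\nabla_{\s}g=\lambda_{ex}\bfV^\top\bfV\s$ is linear in $\s$ and independent of $\bfU$, the difference equals $\lambda_{ex}\bfG_V(\s-\s')$ and $\norm{\bfG_V}_{\mathrm{op}}=\norm{\bfV}_{\mathrm{op}}^2\leq\norm{\bfV}_F^2$ closes it. The only step needing a small trick is the fourth inequality, where $\bfV$ (not $\s$) varies: here $\nabla_{\s}g(\bfV,\cdot,\s)-\nabla_{\s}g(\bfV',\cdot,\s)=\lambda_{ex}\bigl(\bfV^\top\bfV-(\bfV')^\top\bfV'\bigr)\s$, and I would control the matrix difference through the telescoping identity $\bfV^\top\bfV-(\bfV')^\top\bfV'=\bfV^\top(\bfV-\bfV')+(\bfV-\bfV')^\top\bfV'$, bounding the two terms by $\norm{\bfV}_F\norm{\bfV-\bfV'}_F$ and $\norm{\bfV-\bfV'}_F\norm{\bfV'}_F$ (via $\norm{\cdot}_{\mathrm{op}}\leq\norm{\cdot}_F$). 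I do not expect a genuine obstacle; the only care points are correctly discarding the variables on which each gradient does not depend, and recognizing that the dimension factors $\sqrt{|\calV|}$ and $\sqrt{|\calU|}$ are merely an artifact of aggregating row-wise bounds into a Frobenius bound rather than anything intrinsic to the smoothness of $g$.
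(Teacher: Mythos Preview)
Your proposal is correct and follows essentially the same route as the paper: row-wise gradient formulas, operator/Frobenius bounds on the block-diagonal linear parts, and the telescoping identity $\bfV^\top\bfV-(\bfV')^\top\bfV'=\bfV^\top(\bfV-\bfV')+(\bfV-\bfV')^\top\bfV'$ for the fourth inequality. The only cosmetic difference is that you bound $\norm{M_j}_{\mathrm{op}}$ directly via operator-norm inequalities, whereas the paper expands each term and applies Cauchy--Schwarz explicitly; both yield the same constants and the same (deliberately loose) $\sqrt{|\calV|}$, $\sqrt{|\calU|}$ factors.
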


We prove the first part of the theorem.
We decompose the difference of $L_\rho$ before and after a single epoch update into that before and after each alternating step.
\begin{align}
\label{eq:decomposition_lagrangian}
    \nonumber
    L_\rho(\bfV^{k+1}, \bfU^{k+1}, \s^{k+1}, \w^{k+1}) - L_\rho(\bfV^{k}, \bfU^{k}, \s^{k}, \w^{k})
    &= \left(L_\rho(\bfV^{k+1}, \bfU^{k+1}, \s^{k}, \w^{k}) - L_\rho(\bfV^{k}, \bfU^{k}, \s^{k}, \w^{k})\right) \nonumber\\
    &\phantom{=} + \left(L_\rho(\bfV^{k+1}, \bfU^{k+1}, \s^{k+1}, \w^{k}) - L_\rho(\bfV^{k+1}, \bfU^{k+1}, \s^{k}, \w^{k})\right) \nonumber\\
    &\phantom{=} + \left(L_\rho(\bfV^{k+1}, \bfU^{k+1}, \s^{k+1}, \w^{k+1}) - L_\rho(\bfV^{k+1}, \bfU^{k+1}, \s^{k+1}, \w^{k})\right).
\end{align}
Lemma \ref{lem:g_smooth} implies the upper bound on each term in the RHS:% (i.e. the difference of $L_\rho$ before and after each alternating update step)
\begin{lemma}
\label{lem:vu_sub_ub}
The update of $\bfV$ and $\bfU$ in the $(k+1)$-step satisfies
\begin{align*}
    L_\rho(\bfV^{k+1}, \bfU^{k+1}, \s^{k}, \w^{k}) - L_\rho(\bfV^{k}, \bfU^{k}, \s^{k}, \w^{k})
    &\leq \frac{\sqrt{|\calU|}((1 + \alpha_0)C_V + \bar{\lambda}_U) - 1/\gamma}{2}\|\bfU^{k+1}-\bfU^{k}\|_F^2 - \frac{\underline{\lambda}_V}{2}\|\bfV^{k+1}-\bfV^{k}\|_F^2.
\end{align*}
\end{lemma}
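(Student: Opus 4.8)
The plan is to bound the one-epoch change of $L_\rho$, holding $\s^k$ and $\w^k$ fixed, by splitting it into the exact $\bfV$ update and the inexact proximal-gradient $\bfU$ update and then adding the two bounds. Concretely, I would write
\[
L_\rho(\bfV^{k+1},\bfU^{k+1},\s^k,\w^k) - L_\rho(\bfV^k,\bfU^k,\s^k,\w^k)
= \underbrace{\left[L_\rho(\bfV^{k+1},\bfU^{k+1},\s^k,\w^k) - L_\rho(\bfV^{k+1},\bfU^k,\s^k,\w^k)\right]}_{\bfU\text{-step}} + \underbrace{\left[L_\rho(\bfV^{k+1},\bfU^k,\s^k,\w^k) - L_\rho(\bfV^k,\bfU^k,\s^k,\w^k)\right]}_{\bfV\text{-step}},
\]
and treat each bracket separately.

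For the $\bfV$ step, the key observation is that the ADMM penalty and the dual term do not depend on $\bfV$, so $\bfV^{k+1}=\argmin_\bfV g(\bfV,\bfU^k,\s^k)$ is an \emph{exact} minimizer. Since the Hessian of $g$ with respect to each row $\bv_j$ contains the block $\lambda_V^{(j)}\I \succeq \underline{\lambda}_V\I$ while every other contribution ($\sum_i r_{i,j}\bu_i\bu_i^\top$, $\alpha_0\bfG_U^k$, and $\lambda_{ex}\s^k(\s^k)^\top$) is positive semidefinite, $g(\cdot,\bfU^k,\s^k)$ is $\underline{\lambda}_V$-strongly convex in $\bfV$. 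Applying the strong-convexity inequality at the minimizer then yields $L_\rho(\bfV^{k+1},\bfU^k,\s^k,\w^k) - L_\rho(\bfV^k,\bfU^k,\s^k,\w^k) \leq -\tfrac{\underline{\lambda}_V}{2}\|\bfV^{k+1}-\bfV^k\|_F^2$, which produces the second term of the claim.

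For the $\bfU$ step I would write the ADMM penalty as $P(\bfU)\coloneqq\tfrac{\rho}{2}\|\tfrac{1}{|\calU|}\bfU^\top\1 - \s^k + \w^k\|_2^2$ and exploit the fact that the \emph{same} $P$ appears in $L_\rho$ and in the proximal subproblem defining $\bfU^{k+1}$. Two ingredients combine. First, the $\bfU$-smoothness bound of Lemma~\ref{lem:g_smooth}, with modulus $L_U \leq \sqrt{|\calU|}((1+\alpha_0)C_V+\bar{\lambda}_U)$ after invoking $\|\bfV^{k+1}\|_F^2\leq C_V$, gives the descent inequality $g(\bfV^{k+1},\bfU^{k+1},\s^k)-g(\bfV^{k+1},\bfU^k,\s^k) \leq \langle\nabla_\bfU g^k,\bfU^{k+1}-\bfU^k\rangle_F + \tfrac{L_U}{2}\|\bfU^{k+1}-\bfU^k\|_F^2$. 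Second, since $\bfU^{k+1}$ minimizes the proximal objective, evaluating it against the feasible point $\bfU^k$ yields $\langle\nabla_\bfU g^k,\bfU^{k+1}-\bfU^k\rangle_F + P(\bfU^{k+1})-P(\bfU^k) \leq -\tfrac{1}{2\gamma}\|\bfU^{k+1}-\bfU^k\|_F^2$. Expressing the $\bfU$-step increment of $L_\rho$ as $\left[g(\bfV^{k+1},\bfU^{k+1},\s^k)-g(\bfV^{k+1},\bfU^k,\s^k)\right]+\left[P(\bfU^{k+1})-P(\bfU^k)\right]$, substituting the first inequality and then the second, the shared inner-product and the $P$ terms cancel, leaving exactly $\tfrac{L_U-1/\gamma}{2}\|\bfU^{k+1}-\bfU^k\|_F^2 \leq \tfrac{\sqrt{|\calU|}((1+\alpha_0)C_V+\bar{\lambda}_U)-1/\gamma}{2}\|\bfU^{k+1}-\bfU^k\|_F^2$, the first term of the claim.

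The main obstacle will be the bookkeeping around the inexact $\bfU$ step: because $\bfU^{k+1}$ comes from proximal gradient rather than exact minimization of $L_\rho$, the penalty $P$ must be carried simultaneously through the descent lemma (applied to $g$ alone) and through the proximal optimality inequality (which sees both $g$'s linearization and $P$), so that $P(\bfU^{k+1})-P(\bfU^k)$ and the gradient inner product telescope and only the quadratic $\tfrac{L_U-1/\gamma}{2}\|\bfU^{k+1}-\bfU^k\|_F^2$ survives. This clean cancellation is exactly what the shared structure of $P$ in the augmented Lagrangian and the subproblem affords. Adding the $\bfU$-step and $\bfV$-step bounds then gives the stated inequality.
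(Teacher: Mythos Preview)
Your proposal is correct and follows essentially the same route as the paper: the paper also splits the change into the $\bfV$ and $\bfU$ contributions, handles the $\bfV$ step via the $\underline{\lambda}_V$-strong convexity of $g(\cdot,\bfU^k,\s^k)$ together with exact optimality of $\bfV^{k+1}$, and handles the $\bfU$ step by combining the proximal-subproblem inequality $\Phi(\bfU^{k+1})\le\Phi(\bfU^k)$ with the $\sqrt{|\calU|}((1+\alpha_0)C_V+\bar{\lambda}_U)$-smoothness of $g(\bfV^{k+1},\cdot,\s^k)$ from Lemma~\ref{lem:g_smooth}. The only cosmetic difference is ordering: the paper writes the combined decomposition first and then inserts the two ingredients, whereas you bound each bracket separately and add.
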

\begin{lemma}
\label{lem:s_sub_ub}
The update of $\s$ in the $(k+1)$-th step satisfies
\begin{align*}
  L_\rho(\bfV^{k+1}, \bfU^{k+1}, \s^{k+1}, \w^{k}) - L_\rho(\bfV^{k+1}, \bfU^{k+1}, \s^{k}, \w^{k})
  \leq - \frac{\rho}{2}\|\s^{k+1}-\s^{k}\|_2^2.
\end{align*}
\end{lemma}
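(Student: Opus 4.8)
The plan is to exploit the fact that, unlike the $\bfU$ step, the $\s$-update is an \emph{exact} minimization of a quadratic that is $\rho$-strongly convex, so the guaranteed decrease follows directly from the standard descent inequality for strongly convex functions. In particular, no Lipschitz-smoothness estimate is required here, so this lemma does not invoke \cref{lem:g_smooth}; the argument is self-contained.

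First I would isolate the $\s$-dependent part of the augmented Lagrangian. Fixing $\bfV=\bfV^{k+1}$, $\bfU=\bfU^{k+1}$, and $\w=\w^{k}$, every term of $L_\rho$ other than the exposure penalty and the ADMM quadratic is constant in $\s$. Writing
\[
\phi(\s) \coloneqq \frac{\lambda_{ex}}{2}\norm{\bfV^{k+1}\s}_2^2 + \frac{\rho}{2}\norm{\frac{1}{|\calU|}(\bfU^{k+1})^\top\1 - \s + \w^{k}}_2^2,
\]
the left-hand side of the lemma equals exactly $\phi(\s^{k+1})-\phi(\s^{k})$, since the constant terms cancel in the difference.

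Next I would establish strong convexity by computing the Hessian, $\nabla_\s^2\phi = \lambda_{ex}(\bfV^{k+1})^\top\bfV^{k+1} + \rho\I = \lambda_{ex}\bfG_V^{k+1} + \rho\I$. Because $\bfG_V^{k+1}$ is positive semidefinite and $\lambda_{ex}>0$, we have $\nabla_\s^2\phi \succeq \rho\I$, so $\phi$ is $\rho$-strongly convex. The point worth flagging is that the entire strong-convexity modulus is supplied by the ADMM penalty $\rho$, with the exposure term contributing only a PSD part; this is precisely why the clean constant $\rho/2$ appears in the bound and ultimately feeds the telescoping argument in the proof of \cref{thm:exADMM}.

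Finally I would invoke exact optimality. Since $\s^{k+1}=\argmin_\s\phi(\s)$ admits the closed form $\rho(\lambda_{ex}\bfG_V^{k+1}+\rho\I)^{-1}\!\left(\frac{1}{|\calU|}(\bfU^{k+1})^\top\1+\w^{k}\right)$, the first-order condition $\nabla\phi(\s^{k+1})=0$ holds. Applying the $\rho$-strong-convexity inequality at $\s^{k}$ expanded around $\s^{k+1}$ and discarding the vanishing inner-product term $\langle\nabla\phi(\s^{k+1}),\s^{k}-\s^{k+1}\rangle$ gives $\phi(\s^{k}) \geq \phi(\s^{k+1}) + \frac{\rho}{2}\norm{\s^{k}-\s^{k+1}}_2^2$, and rearranging yields the claimed bound. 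There is no genuine obstacle here; the only care needed is to confirm that the minimization is exact so that the gradient truly vanishes, and that all dropped terms are independent of $\s$ so the reduction to $\phi$ is valid.
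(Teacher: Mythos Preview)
Your proposal is correct and follows essentially the same approach as the paper: isolate the $\s$-dependent part of $L_\rho$, show it is $\rho$-strongly convex (you via the Hessian $\lambda_{ex}\bfG_V^{k+1}+\rho\I\succeq\rho\I$, the paper via the equivalent monotonicity inequality $\langle\nabla h^k(\s)-\nabla h^k(\s'),\s-\s'\rangle\geq\rho\|\s-\s'\|_2^2$), and combine with the first-order optimality $\nabla\phi(\s^{k+1})=0$ to get the descent bound. The only cosmetic difference is that the paper keeps the $\s$-independent pieces of $g$ inside its auxiliary function $h^k$ rather than dropping them up front as you do with $\phi$.
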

\begin{lemma}
\label{lem:w_sub_ub}
The update of $\w$ in the $(k+1)$-th step satisfies
\begin{align*}
    L_\rho(\bfV^{k+1}, \bfU^{k+1}, \s^{k+1}, \w^{k+1}) - L_\rho(\bfV^{k+1}, \bfU^{k+1}, \s^{k+1}, \w^{k})
    &\leq \frac{3\lambda_{ex}^2C_V^2}{\rho}\|\s^{k+1}-\s^{k}\|_2^2+\frac{6\lambda_{ex}^2 C_V C_s}{\rho}\|\bfV^{k+1}-\bfV^{k}\|_F^2.
\end{align*}
\end{lemma}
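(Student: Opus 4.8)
The plan is to use that replacing $\w^k$ by $\w^{k+1}$ leaves the first two terms of $L_\rho$ (the iALS loss $L$ and the fairness penalty $\tfrac{\lambda_{ex}}{2}\norm{\bfV^{k+1}\s^{k+1}}_2^2$) untouched, so the whole increment comes from the two $\w$-dependent quadratics; I would then control that increment by the $\s$- and $\bfV$-increments through the stationarity of the $\s$-subproblem.

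First I would isolate the change. Writing the primal residual $\bd \coloneqq \frac{1}{|\calU|}(\bfU^{k+1})^\top\1 - \s^{k+1}$, the $\w$-dependent part of $L_\rho$ at the fixed triple $(\bfV^{k+1},\bfU^{k+1},\s^{k+1})$ is $\phi(\w) = \frac{\rho}{2}\norm{\bd+\w}_2^2 - \frac{\rho}{2}\norm{\w}_2^2 = \frac{\rho}{2}\norm{\bd}_2^2 + \rho\langle\bd,\w\rangle$, which is affine in $\w$. Since the dual update is exactly $\w^{k+1}=\w^k+\bd$, i.e. $\bd = \w^{k+1}-\w^k$, this gives the exact identity $L_\rho(\cdots,\w^{k+1}) - L_\rho(\cdots,\w^k) = \rho\langle\bd,\w^{k+1}-\w^k\rangle = \rho\norm{\w^{k+1}-\w^k}_2^2$, the standard ADMM fact that the augmented Lagrangian \emph{increases} by $\rho$ times the squared dual step.

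The crux is then to bound $\rho\norm{\w^{k+1}-\w^k}_2^2$ by $\norm{\s^{k+1}-\s^k}_2^2$ and $\norm{\bfV^{k+1}-\bfV^k}_F^2$. For this I would read off the first-order optimality condition of the $\s$-subproblem, $\lambda_{ex}\bfG_V^{k+1}\s^{k+1} = \rho(\frac{1}{|\calU|}(\bfU^{k+1})^\top\1 - \s^{k+1} + \w^k)$; the right-hand side is precisely $\rho\w^{k+1}$ by the dual update, so I obtain the key relation $\rho\w^{k+1} = \lambda_{ex}\bfG_V^{k+1}\s^{k+1}$ (and likewise $\rho\w^k = \lambda_{ex}\bfG_V^{k}\s^{k}$). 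Substituting yields $\rho\norm{\w^{k+1}-\w^k}_2^2 = \frac{\lambda_{ex}^2}{\rho}\norm{\bfG_V^{k+1}\s^{k+1}-\bfG_V^{k}\s^{k}}_2^2$.

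Finally I would split $\bfG_V^{k+1}\s^{k+1}-\bfG_V^{k}\s^{k} = \bfG_V^{k+1}(\s^{k+1}-\s^k) + (\bfG_V^{k+1}-\bfG_V^{k})\s^k$ and apply the weighted inequality $\norm{a+b}_2^2 \le 3\norm{a}_2^2 + \frac{3}{2}\norm{b}_2^2$ (Young with parameter $2$, tuned to produce the stated constants). The two factors are controlled using $\norm{\bfG_V^{k+1}}_{\mathrm{op}} \le \norm{\bfV^{k+1}}_F^2 \le C_V$ and $\norm{\s^k}_2^2\le C_s$, together with the telescoping identity $\bfG_V^{k+1}-\bfG_V^{k} = (\bfV^{k+1})^\top(\bfV^{k+1}-\bfV^k)+(\bfV^{k+1}-\bfV^k)^\top\bfV^k$, which gives $\norm{\bfG_V^{k+1}-\bfG_V^{k}}_{\mathrm{op}} \le 2\sqrt{C_V}\norm{\bfV^{k+1}-\bfV^k}_F$. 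Collecting the terms produces $\frac{3\lambda_{ex}^2 C_V^2}{\rho}\norm{\s^{k+1}-\s^k}_2^2 + \frac{6\lambda_{ex}^2 C_V C_s}{\rho}\norm{\bfV^{k+1}-\bfV^k}_F^2$, as claimed. The main obstacle is the middle step: a priori the dual increment $\w^{k+1}-\w^k$ is not bounded by primal increments at all, and it is exactly the $\s$-stationarity that rewrites it as $\frac{\lambda_{ex}}{\rho}(\bfG_V^{k+1}\s^{k+1}-\bfG_V^{k}\s^{k})$; everything afterward is matrix-norm bookkeeping powered by the boundedness assumptions of the theorem.
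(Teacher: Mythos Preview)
Your proposal is correct and follows essentially the same route as the paper: the paper also shows $L_\rho(\cdots,\w^{k+1})-L_\rho(\cdots,\w^{k})=\rho\|\w^{k+1}-\w^{k}\|_2^2$, then uses the $\s$-stationarity to write $\rho\w^{k+1}=\nabla_{\s}g(\bfV^{k+1},\bfU^{k+1},\s^{k+1})=\lambda_{ex}\bfG_V^{k+1}\s^{k+1}$ (and likewise at step $k$), and finally bounds the resulting difference via the same telescoping of $\bfG_V^{k+1}\s^{k+1}-\bfG_V^{k}\s^{k}$. The only cosmetic difference is that the paper phrases the last step through Lemma~\ref{lem:g_smooth} and the three-term inequality $(a+b+c)^2\le 3(a^2+b^2+c^2)$, whereas you compute the Gramian bounds directly and use the two-term Young inequality $\|a+b\|_2^2\le 3\|a\|_2^2+\tfrac{3}{2}\|b\|_2^2$; both yield the identical constants.
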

By using \cref{eq:decomposition_lagrangian} and Lemmas \ref{lem:vu_sub_ub} to \ref{lem:w_sub_ub}, under the assumptions $\rho \geq \max\left(\frac{24\lambda_{ex}^2 C_V C_{\s}}{\underline{\lambda}_V}, \frac{1}{2}+\sqrt{\frac{1}{4}+6\lambda_{ex}^2 C_V^2}\right)$ and $\gamma \leq \frac{1}{\sqrt{|\calU|}((1+\alpha_0)C_V+\bar{\lambda}_U) + 1}$, we have
\begin{align}
\label{eq:L_rho_sub_ub}
    &L_\rho(\bfV^{k+1}, \bfU^{k+1}, \s^{k+1}, \w^{k+1}) - L_\rho(\bfV^{k}, \bfU^{k}, \s^{k}, \w^{k}) \nonumber\\
    &\leq \frac{\sqrt{|\calU|}((1 + \alpha_0)C_V + \bar{\lambda}_U) - 1/\gamma}{2}\|\bfU^{k+1}-\bfU^{k}\|_F^2 - \frac{\underline{\lambda}_V}{2}\|\bfV^{k+1}-\bfV^{k}\|_F^2 - \frac{\rho}{2}\|\s^{k+1}-\s^{k}\|_2^2 \nonumber \\
    &\phantom{\leq} + \frac{3\lambda_{ex}^2 C_V^2}{\rho}\|\s^{k+1}-\s^{k}\|_2^2+\frac{6\lambda_{ex}^2 C_V C_s}{\rho}\|\bfV^{k+1}-\bfV^{k}\|_F^2 \nonumber\\
    &= \frac{\sqrt{|\calU|}((1 + \alpha_0)C_V + \bar{\lambda}_U) - 1/\gamma}{2}\|\bfU^{k+1}-\bfU^{k}\|_F^2 + \left(- \frac{\underline{\lambda}_V}{2} + \frac{6\lambda_{ex}^2 C_V C_s}{\rho}\right)\|\bfV^{k+1} - \bfV^{k}\|_F^2 
    + \left(- \frac{\rho}{2} + \frac{3\lambda_{ex}^2 C_V^2}{\rho}\right)\|\s^{k+1}-\s^{k}\|_2^2 \nonumber\\
    & \leq -\frac{1}{2}\|\bfU^{k+1}-\bfU^{k}\|_F^2 - \frac{\underline{\lambda}_V}{4}\|\bfV^{k+1}-\bfV^{k}\|_F^2 
    - \frac{1}{2}\|\s^{k+1}-\s^{k}\|_2^2 \leq 0.
\end{align}
Therefore, $L_{\rho}(\bfV^{k}, \bfU^{k}, \s^{k}, \w^{k})$ is monotonically decreasing.

Here, we obtain the following lower bound on $L_{\rho}(\bfV^{k}, \bfU^{k}, \s^{k}, \w^{k})$:
\begin{lemma}
\label{lem:L_rho_lb}
$\bfV^{k}, \bfU^{k}, \s^{k}$, and $\w^{k}$ updated by exADMM satisfy
\begin{align*}
    L_\rho(\bfV^{k}, \bfU^{k}, \s^{k}, \w^{k}) \geq \frac{\rho-\lambda_{ex} C_V}{2}\left\|\frac{1}{|\calU|}(\bfU^{k})^\top \1-\s^{k}\right\|_2^2.
\end{align*}
\end{lemma}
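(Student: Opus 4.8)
The plan is to eliminate the dual variable $\w^{k}$ from the augmented Lagrangian by exploiting the first-order stationarity of the $\s$-subproblem, which is the last primal block updated in each epoch. First I would record the optimality condition for the $\s$-update: since $\s^{k}$ minimizes $\frac{\lambda_{ex}}{2}\|\bfV^{k}\s\|_2^2 + \frac{\rho}{2}\|\frac{1}{|\calU|}(\bfU^{k})^\top\1 - \s + \w^{k-1}\|_2^2$, setting the gradient to zero gives $\lambda_{ex}\bfG_V^{k}\s^{k} = \rho\bigl(\frac{1}{|\calU|}(\bfU^{k})^\top\1 - \s^{k} + \w^{k-1}\bigr)$. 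Recognizing the right-hand parenthesis as exactly $\w^{k}$ by the $\w$-update rule yields the key identity $\rho\w^{k} = \lambda_{ex}\bfG_V^{k}\s^{k}$ for every $k\ge 1$; the base case $k=0$ is trivial, since $\w^{0}=\vec{0}$ and the residual vanishes by the initialization $\s^{0}=\frac{1}{|\calU|}(\bfU^{0})^\top\1$, making both sides of the claimed inequality zero.

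Next I would abbreviate the residual as $\bd^{k} \coloneqq \frac{1}{|\calU|}(\bfU^{k})^\top\1 - \s^{k}$ and expand the two penalty terms of $L_\rho$ as $\frac{\rho}{2}\|\bd^{k}+\w^{k}\|_2^2 - \frac{\rho}{2}\|\w^{k}\|_2^2 = \frac{\rho}{2}\|\bd^{k}\|_2^2 + \rho\langle\bd^{k},\w^{k}\rangle$, so the quadratic-in-$\w$ pieces cancel and only a cross term survives. Substituting the identity from the previous step rewrites this cross term as $\rho\langle\bd^{k},\w^{k}\rangle = \lambda_{ex}\langle\bfV^{k}\bd^{k},\bfV^{k}\s^{k}\rangle$, which is the crucial manoeuvre: it reintroduces the factor $\bfV^{k}\s^{k}$ so that it can be paired against the exposure penalty $\frac{\lambda_{ex}}{2}\|\bfV^{k}\s^{k}\|_2^2$ already present in $L_\rho$.

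Finally I would discard the prediction loss $L(\bfV^{k},\bfU^{k})\ge 0$ (it is a sum of positively-weighted squared Frobenius norms, hence nonnegative) and apply Young's inequality to the cross term, $\lambda_{ex}\langle\bfV^{k}\bd^{k},\bfV^{k}\s^{k}\rangle \ge -\frac{\lambda_{ex}}{2}\|\bfV^{k}\bd^{k}\|_2^2 - \frac{\lambda_{ex}}{2}\|\bfV^{k}\s^{k}\|_2^2$, calibrated precisely so the $-\frac{\lambda_{ex}}{2}\|\bfV^{k}\s^{k}\|_2^2$ half annihilates the exposure penalty. What remains is $L_\rho \ge \frac{\rho}{2}\|\bd^{k}\|_2^2 - \frac{\lambda_{ex}}{2}\|\bfV^{k}\bd^{k}\|_2^2$, and the submultiplicative bound $\|\bfV^{k}\bd^{k}\|_2^2 \le \|\bfV^{k}\|_F^2\|\bd^{k}\|_2^2 \le C_V\|\bd^{k}\|_2^2$ together with the boundedness assumption $\|\bfV^{k}\|_F^2\le C_V$ produces exactly $L_\rho \ge \frac{\rho-\lambda_{ex}C_V}{2}\|\bd^{k}\|_2^2$.

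The main obstacle, and the one genuinely non-routine idea, is the first step: realizing that the $\s$-stationarity delivers a closed form $\rho\w^{k}=\lambda_{ex}\bfG_V^{k}\s^{k}$ for the dual variable. Without it the cross term $\rho\langle\bd^{k},\w^{k}\rangle$ is uncontrolled and no lower bound is available; with it, the rest is a matched pairing where the Young split is tuned so one half cancels the exposure penalty and the other is absorbed by $C_V$. The one technical point I would double-check is that the operator-by-Frobenius norm inequality is invoked in the correct direction (so that no spurious constant is lost when converting $\bfG_V^{k}$ back to $\bfV^{k}$), since this lower bound is precisely what later combines with the monotone decrease in \cref{eq:L_rho_sub_ub} to establish boundedness below of the Lagrangian.
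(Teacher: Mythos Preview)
Your proof is correct and follows essentially the same route as the paper: both hinge on the identity $\rho\w^{k}=\nabla_{\s}g(\bfV^{k},\bfU^{k},\s^{k})=\lambda_{ex}\bfG_V^{k}\s^{k}$ obtained from $\s$-optimality plus the $\w$-update, then expand the penalty terms and absorb the cross term. The only cosmetic difference is that the paper packages your Young-inequality-plus-norm-bound step as the descent lemma for the $\lambda_{ex}C_V$-smoothness of $g$ in $\s$ followed by $g\bigl(\bfV^{k},\bfU^{k},\tfrac{1}{|\calU|}(\bfU^{k})^{\top}\1\bigr)\geq 0$; your explicit treatment of the $k=0$ base case is a detail the paper omits.
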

Thus, when $\rho \geq \lambda_{ex} C_V$ holds, $L_\rho(\bfV^{k}, \bfU^{k}, \s^{k}, \w^{k})$ is lower bounded by $0$.
Therefore, owing to its monotonic decrease, $L_\rho(\bfV^{k}, \bfU^{k}, \s^{k}, \w^{k})$ converges to some constant value, and $L_\rho(\bfV^{k+1}, \bfU^{k+1}, \s^{k+1}, \w^{k+1}) - L_\rho(\bfV^{k}, \bfU^{k}, \s^{k}, \w^{k})$ converges to $0$.
From \cref{eq:L_rho_sub_ub} and the fact that $L_\rho(\bfV^{k+1}, \bfU^{k+1}, \s^{k+1}, \w^{k+1}) - L_\rho(\bfV^{k}, \bfU^{k}, \s^{k}, \w^{k})$ converges to $0$, $\|\bfV^{k+1}-\bfV^{k}\|_F$, $\|\bfU^{k+1}-\bfU^{k}\|_F$, and $\|\s^{k+1}-\s^{k}\|_2$ also converge to $0$.
Finally, from Lemma \ref{lem:w_sub_ub}, we have:
\begin{align*}
    L_\rho(\bfV^{k+1}, \bfU^{k+1}, \s^{k+1}, \w^{k+1}) - L_\rho(\bfV^{k+1}, \bfU^{k+1}, \s^{k+1}, \w^{k}) &= \rho\|\w^{k+1}-\w^{k}\|_2^2 \\
    &\leq  \frac{3\lambda_{ex}^2C_V^2}{\rho}\|\s^{k+1}-\s^{k}\|_2^2+\frac{6\lambda_{ex}^2C_VC_s}{\rho}\|\bfV^{k+1}-\bfV^{k}\|_F^2,
\end{align*}
and then we can also state that $\|\w^{k+1}-\w^{k}\|_2$ converges to $0$.

We next prove the second part of the theorem.
Since $\bfV^{k+1}$ minimizes $L_{\rho}(\bfV,\bfU^{k},\s^{k},\w^{k})$, it holds that $\nabla_{\bfV}L_{\rho}(\bfV^{k+1},\bfU^{k},\s^{k},\w^{k})=0$, and we obtain the following inequality from Lemma \ref{lem:g_smooth}:
\begin{align*}
    \|\nabla_{\bfV}L_{\rho}(\bfV^{k},\bfU^{k},\s^{k},\w^{k})\|_F 
    &= \|\nabla_{\bfV}L_{\rho}(\bfV^{k+1},\bfU^{k},\s^{k},\w^{k}) - \nabla_{\bfV}L_{\rho}(\bfV^{k},\bfU^{k},\s^{k},\w^{k})\|_F \\
    &= \|\nabla_{\bfV}g(\bfV^{k+1},\bfU^{k},\s^{k}) - \nabla_{\bfV}g(\bfV^{k},\bfU^{k},\s^{k})\|_F \\
    &\leq \sqrt{|\calV|}\left(\left(1 + \alpha_0\right)C_U + \lambda_{ex} C_s + \bar{\lambda}_V\right)\|\bfV^{k+1}-\bfV^{k}\|_F.
\end{align*}
Since $\|\bfV^{k+1}-\bfV^{k}\|_F$ converges to $0$, $\nabla_{\bfV}L_{\rho}(\bfV^{k},\bfU^{k},\s^{k},\w^{k})$ converge to $0$.
Similarly, we have:
\begin{align*}
    &\|\nabla_{\bfU}L_{\rho}(\bfV^{k},\bfU^{k},\s^{k},\w^{k})\|_F \\
    &= \left\|\nabla_{\bfU}g(\bfV^{k},\bfU^{k},\s^{k}) + \frac{\rho}{|\calU|^2}\1\1^\top \bfU^{k} + \frac{\rho}{|\calU|}\1(\w^{k}-\s^{k})^\top\right\|_F \\
    &= \left\|\nabla_{\bfU}g(\bfV^{k},\bfU^{k},\s^{k}) - \nabla_{\bfU}g(\bfV^{k},\bfU^{k-1}, \s^{k-1})  + \frac{\rho}{|\calU|}\1(\w^{k}-\w^{k-1})^{\top} - \frac{\rho}{|\calU|}\1(\s^{k}-\s^{k-1})^{\top}  - \frac{1}{\gamma}(\bfU^{k}-\bfU^{k-1})\right\|_F \\
    &\leq \|\nabla_{\bfU}g(\bfV^{k},\bfU^{k},\s^{k}) - \nabla_{\bfU}g(\bfV^{k},\bfU^{k-1}, \s^{k-1})\|_F  + \frac{1}{\gamma}\|\bfU^{k}-\bfU^{k-1}\|_F + \frac{\rho}{|\calU|}\| \1(\w^{k}-\w^{k-1})^{\top}\|_F+ \frac{\rho}{|\calU|}\|\1(\s^{k}-\s^{k-1})^{\top}\|_F \\
    &\leq \sqrt{|\calU|}\left((1 + \alpha_0)C_V + \bar{\lambda}_U\right)\|\bfU^{k}-\bfU^{k-1}\|_F + \frac{1}{\gamma}\|\bfU^{k}-\bfU^{k-1}\|_F + \frac{\rho}{|\calU|}\| \1(\w^{k}-\w^{k-1})^{\top}\|_F+ \frac{\rho}{|\calU|}\|\1(\s^{k}-\s^{k-1})^{\top}\|_F,
\end{align*}
where the second equality follows from the fact that $\bfU^{k}$ minimizes $\frac{\rho}{2}\|\frac{1}{|\calU|}\bfU^\top\1 + \w^{k-1} - \s^{k-1}\|_2^2 - \frac{\rho}{2} \|\w^{k-1}\|_2^2 + \frac{1}{2\gamma}\|\bfU-\bfU^{k-1}\|_F^2 + \langle \bfU-\bfU^{k-1}, \nabla_{\bfU} g(\bfV^{k}, \bfU^{k-1}, \s^{k-1})\rangle_F$.
Since $\|\bfU^{k}-\bfU^{k-1}\|_F$, $\|\w^{k}-\w^{k-1}\|_2$ converge to $0$, this inequality implies that $\nabla_{\bfU}L_{\rho}(\bfV^{k},\bfU^{k},\s^{k},\w^{k})$ converges to $0$.

Because $\s^{k}$ minimizes $L_{\rho}(\bfV^{k},\bfU^{k},\s,\w^{k-1})$, we also have 
\begin{align*}
    \|\nabla_{\s}L_{\rho}(\bfV^{k},\bfU^{k},\s^{k},\w^{k})\|_2 &= \|\nabla_{\s}L_{\rho}(\bfV^{k},\bfU^{k},\s^{k},\w^{k-1}) - \nabla_{\s}L_{\rho}(\bfV^{k},\bfU^{k},\s^{k},\w^{k})\|_2 \\
    &= \rho\|\w^{k} - \w^{k-1}\|_2.
\end{align*}
Thus, since $\|\w^{k} - \w^{k-1}\|_2$ converges to $0$, $\nabla_{\s}L_{\rho}(\bfV^{k},\bfU^{k},\s^{k},\w^{k})$ converges to $0$.

Finally, it holds that
\begin{align*}
    \|\nabla_{\w}L_{\rho}(\bfV^{k},\bfU^{k},\s^{k},\w^{k})\|_2 &= \rho\left\|\frac{1}{|\calU|}(\bfU^{k})^\top \1 - \s^{k}\right\|_2 \\
    &= \rho\|\w^{k} - \w^{k-1}\|_2,
\end{align*}
and hence $\nabla_{\w}L_{\rho}(\bfV^{k},\bfU^{k},\s^{k},\w^{k})$ converges to $0$.
\end{proof}

\subsection{Proof of Lemma \ref{lem:vu_sub_ub}}
\begin{proof}
From the definition of $L_\rho(\bfV, \bfU, \s^{k}, \w^{k})$, we have:
\begin{align}
\label{eq:vu_sub_eq}
    &L_\rho(\bfV^{k+1}, \bfU^{k+1}, \s^{k}, \w^{k}) - L_\rho(\bfV^{k}, \bfU^{k}, \s^{k}, \w^{k})
    \nonumber\\
    &= g(\bfV^{k+1}, \bfU^{k+1}, \s^{k}) + \frac{\rho}{2}\norm{\w^{k} + \frac{1}{|\calU|}(\bfU^{k+1})^\top\1 - \s^{k}}_2^2 - \frac{\rho}{2} \|\w^{k}\|_2^2 - g(\bfV^{k}, \bfU^{k}, \s^{k}) - \frac{\rho}{2}\norm{\w^{k} + \frac{1}{|\calU|}(\bfU^{k})^\top \1 - \s^{k}}_2^2 + \frac{\rho}{2} \|\w^{k}\|_2^2 \nonumber\\
    &= g(\bfV^{k+1}, \bfU^{k+1}, \s^{k}) - g(\bfV^{k+1}, \bfU^{k}, \s^{k}) + g(\bfV^{k+1}, \bfU^{k}, \s^{k}) - g(\bfV^{k}, \bfU^{k}, \s^{k}) + \frac{\rho}{2}\norm{\w^{k} + \frac{1}{|\calU|}(\bfU^{k+1})^\top \1 - \s^{k}}_2^2 - \frac{\rho}{2}\norm{\w^{k} + \frac{1}{|\calU|}(\bfU^{k})^\top\1 - \s^{k}}_2^2 .
\end{align}
Denoting the Gramian of $\bfU$ by $\bfG_U \coloneqq \bfU^\top \bfU$,
we have
\begin{align*}
    &\langle \nabla_{\bfV} g(\bfV, \bfU, \s) - \nabla_{\bfV} g(\bfV', \bfU, \s), \bfV - \bfV'\rangle_F \\
    &= \sum_{j \in \calV}\Biggl\langle\left(\sum_{i \in \calU}r_{i,j}\bu_i\bu_i^{\top} + \alpha_0\bfG_U + \lambda_{ex}\s\s^{\top} + \lambda_V^{(j)}\I\right)\bv_j - \sum_{i \in \calU}r_{i,j}\bu_i - \left(\sum_{i \in \calU}r_{i,j}\bu_i\bu_i^{\top} + \alpha_0\bfG_U + \lambda_{ex}\s\s^{\top} + \lambda_V^{(j)}\I\right)\bv_j' + \sum_{i \in \calU}r_{i,j}\bu_i, \bv_j - \bv_j' \Biggr\rangle \\
    &= \sum_{j \in \calV}\left\langle\left(\sum_{i \in \calU}r_{i,j}\bu_i\bu_i^{\top}\right)(\bv_j-\bv_j'), \bv_j - \bv_j' \right\rangle + \sum_{j \in \calV}\left\langle \alpha_0\bfG_U (\bv_j-\bv_j'), \bv_j - \bv_j' \right\rangle \\
    &+ \sum_{j \in \calV}\left\langle\lambda_{ex}\s\s^{\top}(\bv_j-\bv_j'), \bv_j - \bv_j' \right\rangle + \sum_{j \in \calV}\left\langle \lambda_V^{(j)}(\bv_j-\bv_j'), \bv_j - \bv_j' \right\rangle\\
    &= \sum_{j \in \calV}\sum_{i \in \calU}r_{i,j}(\bu_i^{\top}(\bv_j-\bv_j'))^2 + \alpha_0\sum_{j \in \calV}\sum_{i \in \calU}(\bu_i^{\top}(\bv_j-\bv_j'))^2 + \lambda_{ex}\sum_{j \in \calV}(\s^{\top}(\bv_j-\bv_j'))^2 + \sum_{j \in \calV}\| \lambda_V^{(j)}(\bv_j-\bv_j')\|_2^2\\
    &\geq \underline{\lambda}_V\sum_{j \in \calV}\| \bv_j-\bv_j'\|_2^2 = \underline{\lambda}_V\| \bfV-\bfV'\|_F^2,
\end{align*}
where $\underline{\lambda}_V\coloneqq\min_{j \in \calV} \lambda_V^{(j)}$.
Thus, the function $g$ is $\underline{\lambda}_V$-strongly convex with respect to $\bfV$.
We also have
\begin{align}
    \label{eq:v_sub_ub}
    g(\bfV^{k+1},\bfU^{k},\s^{k}) - g(\bfV^{k},\bfU^{k},\s^{k}) 
    &\leq \langle \nabla_{\bfV}g(\bfV^{k+1},\bfU^{k},\s^{k}), \bfV^{k+1} - \bfV^{k}\rangle_F
    - \frac{\underline{\lambda}_V}{2}\|\bfV^{k+1} - \bfV^{k}\|_F^2 \nonumber\\
    &=  - \frac{\underline{\lambda}_V}{2}\|\bfV^{k+1} - \bfV^{k}\|_F^2,
\end{align}
where the last equality follows from the fact that $\bfV^{k+1}$ minimizes $g(\bfV, \bfU^{k}, \s^{k})$; hence $\nabla_{\bfV}g(\bfV^{k+1},\bfU^{k}, \s^{k})=0$ holds.
Moreover, since $\bfU^{k+1}$ minimizes $\frac{\rho}{2}\left\|\w^{k} + \frac{1}{|\calU|}(\bfU)^\top \1 - \s^{k}\right\|_2^2 - \frac{\rho}{2}\|\w^{k}\|_2^2 + \frac{1}{2\gamma}\|\bfU-\bfU^{k}\|_F^2 + \langle \bfU-\bfU^{k}, \nabla_{\bfU} g(\bfV^{k+1}, \bfU^{k}, \s^{k})\rangle_F$, we have:
\begin{align}
    \label{eq:u_sub_ub}
    &\frac{\rho}{2}\left\|\w^{k} + \frac{1}{|\calU|}(\bfU^{k+1})^\top \1 - \s^{k}\right\|_2^2 - \frac{\rho}{2} \|\w^{k}\|_2^2 + \frac{1}{2\gamma}\|\bfU^{k+1}-\bfU^{k}\|_F^2 + \langle \bfU^{k+1}-\bfU^{k}, \nabla_{\bfU} g(\bfV^{k+1}, \bfU^{k}, \s^{k})\rangle_F \nonumber\\
    &\leq \frac{\rho}{2}\left\|\w^{k} + \frac{1}{|\calU|}(\bfU^{k})^\top \1 - \s^{k}\right\|_2^2 - \frac{\rho}{2} \|\w^{k}\|_2^2.
\end{align}
By combining \cref{eq:vu_sub_eq,eq:v_sub_ub,eq:u_sub_ub}, we obtain:
\begin{align}
\label{eq:vu_sub}
    &L_\rho(\bfV^{k+1}, \bfU^{k+1}, \s^{k}, \w^{k}) - L_\rho(\bfV^{k}, \bfU^{k}, \s^{k}, \w^{k}) \nonumber\\
    &\leq g(\bfV^{k+1}, \bfU^{k+1}, \s^{k}) - g(\bfV^{k+1}, \bfU^{k}, \s^{k}) - \langle \bfU^{k+1}-\bfU^{k}, \nabla_{\bfU} g(\bfV^{k+1}, \bfU^{k}, \s^{k})\rangle_F - \frac{1}{2\gamma}\|\bfU^{k+1}-\bfU^{k}\|_F^2 - \frac{\underline{\lambda}_V}{2}\|\bfV^{k+1}-\bfV^{k}\|_F^2.
\end{align}
On the other hand, under the assumption in Theorem~3.1, from Lemma~\ref{lem:g_smooth}, the function $g(\bfV^{k+1},\bfU, \s^{k})$ is $\sqrt{|\calU|}\left((1 + \alpha_0)C_V + \bar{\lambda}_U\right)$-smooth with respect to $\bfU$.
Then, for any $\bfU,\bfU'$, we have:
\begin{align}
     \label{eq:g_smooth_u}
     g(\bfV^{k+1}, \bfU', \s^{k}) - g(\bfV^{k+1}, \bfU, \s^{k}) 
     - \langle \nabla_{\bfU}g(\bfV^{k+1}, \bfU, \s^{k}), \bfU'- \bfU\rangle_F 
     &\leq \frac{\sqrt{|\calU|}((1 + \alpha_0)C_V + \bar{\lambda}_U)}{2}\|\bfU-\bfU'\|_F^2.
\end{align}
By combining \cref{eq:vu_sub} and \cref{eq:g_smooth_u}, we obtain the following inequality:
\begin{align*}
    L_\rho(\bfV^{k+1}, \bfU^{k+1}, \s^{k}, \w^{k}) - L_\rho(\bfV^{k}, \bfU^{k}, \s^{k}, \w^{k})
    &\leq \frac{\sqrt{|\calU|}((1 + \alpha_0)C_V + \bar{\lambda}_U) - 1/\gamma}{2}\|\bfU^{k+1}-\bfU^{k}\|_F^2- \frac{\underline{\lambda}_V}{2}\|\bfV^{k+1}-\bfV^{k}\|_F^2.
\end{align*}
\end{proof}

\subsection{Proof of Lemma \ref{lem:s_sub_ub}}
\begin{proof}
Let us define $h^{k}(\s) \coloneqq g(\bfV^{k+1},\bfU^{k+1},\s) + \frac{\rho}{2}\left\|\w^{k} + \frac{1}{|\calU|}(\bfU^{k+1})^\top \1 - \s\right\|_2^2 - \frac{\rho}{2} \|\w^{k}\|_2^2$.
We have:
\begin{align*}
    &\langle \nabla h^{k}(\s) - \nabla h^{k}(\s'), \s - \s'\rangle \\
    &= \left\langle \nabla_{\s} g(\bfV^{k+1},\bfU^{k+1},\s) -\rho\left(\w^{k}+\frac{1}{|\calU|}(\bfU^{k+1})^\top \1-\s \right) - \nabla_{\s} g(\bfV^{k+1},\bfU^{k+1},\s') + \rho\left(\w^{k} + \frac{1}{|\calU|}(\bfU^{k+1})^\top \1 - \s'\right), \s - \s' \right\rangle \\
    &= \left\langle \nabla_{\s} g(\bfV^{k+1},\bfU^{k+1},\s)  - \nabla_{\s} g(\bfV^{k+1},\bfU^{k+1},\s') + \rho(\s - \s'), \s - \s' \right\rangle \\
    &\geq \rho\|\s-\s'\|_2^2,
\end{align*}
where the inequality follows from the convexity of $g(\bfV^{k+1},\bfU^{k+1},\cdot)$.
Consequently, $h^{k}$ is a $\rho$-strongly convex function.
Therefore, we have:
\begin{align*}
    &L_\rho(\bfV^{k+1}, \bfU^{k+1}, \s^{k+1}, \w^{k}) - L_\rho(\bfV^{k+1}, \bfU^{k+1}, \s^{k}, \w^{k}) \\
    &= h^{k}(\s^{k+1}) - h^{k}(\s^{k}) \leq \langle \nabla h^{k}(\s^{k+1}), \s^{k+1}-\s^{k}\rangle - \frac{\rho}{2}\|\s^{k+1}-\s^{k}\|_2^2 = - \frac{\rho}{2}\|\s^{k+1}-\s^{k}\|_2^2,
\end{align*}
where the last equality follows from that $\s^{k+1}$ minimizes $h^{k}(\s)$, i.e., $\nabla h^{k}(\s^{k+1})=0$.
\end{proof}

\subsection{Proof of Lemma \ref{lem:w_sub_ub}}
\begin{proof}
From the definition of $L_\rho(\bfV^{k+1}, \bfU^{k+1}, \s^{k+1}, \w)$ and the update rule of $\w^{k}$, we have:
\begin{align}
\label{eq:w_sub_eq}
    &L_\rho(\bfV^{k+1}, \bfU^{k+1}, \s^{k+1}, \w^{k+1}) - L_\rho(\bfV^{k+1}, \bfU^{k+1}, \s^{k+1}, \w^{k})  \nonumber\\
    &= \frac{\rho}{2}\left\|\w^{k+1} + \frac{1}{|\calU|}(\bfU^{k+1})^\top\1 - \s^{k+1}\right\|_2^2 - \frac{\rho}{2} \|\w^{k+1}\|_2^2 - \frac{\rho}{2}\left\|\w^{k} + \frac{1}{|\calU|}(\bfU^{k+1})^\top\1 - \s^{k+1}\right\|_2^2 + \frac{\rho}{2} \|\w^{k}\|_2^2 \nonumber\\
    &= \rho\langle \w^{k+1} - \w^{k}, \frac{1}{|\calU|}(\bfU^{k+1})^\top\1-\s^{k+1}\rangle = \rho\|\w^{k+1}-\w^{k}\|_2^2.
\end{align}
On the other hand, since $\s^{k+1}$ minimizes the convex function $h^{k}(\s)$, the first-order optimality condition implies:
\begin{align*}
    \nabla h^{k}(\s^{k+1}) &= \nabla_{\s}g(\bfV^{k+1}, \bfU^{k+1}, \s^{k+1}) - \rho\left(\w^{k}+\frac{1}{|\calU|}(\bfU^{k+1})^\top\1-\s^{k+1}\right) \\
    &= \nabla_{\s}g(\bfV^{k+1}, \bfU^{k+1}, \s^{k+1}) - \rho \w^{k+1}= 0.
\end{align*}
Thus, it holds that
\begin{align}
\label{eq:w_eq}
    \w^{k+1} = \frac{1}{\rho}\nabla_{\s}g(\bfV^{k+1}, \bfU^{k+1}, \s^{k+1}).
\end{align}
By combining \cref{eq:w_sub_eq}, \cref{eq:w_eq}, and Lemma~\ref{lem:g_smooth}, we obtain:
\begin{align*}
    &L_\rho(\bfV^{k+1}, \bfU^{k+1}, \s^{k+1}, \w^{k+1}) - L_\rho(\bfV^{k+1}, \bfU^{k+1}, \s^{k+1}, \w^{k}) \\
    &= \frac{1}{\rho}\|\nabla_{\s}g(\bfV^{k+1}, \bfU^{k+1}, \s^{k+1}) - \nabla_{\s}g(\bfV^{k}, \bfU^{k}, \s^{k})\|_2^2 \\
    &= \frac{1}{\rho}\|\nabla_{\s}g(\bfV^{k+1}, \bfU^{k+1}, \s^{k+1}) - \nabla_{\s}g(\bfV^{k+1}, \bfU^{k}, \s^{k}) + \nabla_{\s}g(\bfV^{k+1}, \bfU^{k}, \s^{k}) - \nabla_{\s}g(\bfV^{k}, \bfU^{k}, \s^{k})\|_2^2 \\
    &\leq \frac{1}{\rho}\Bigl(\|\nabla_{\s}g(\bfV^{k+1}, \bfU^{k+1}, \s^{k+1}) - \nabla_{\s}g(\bfV^{k+1}, \bfU^{k}, \s^{k})\|_2 + \|\nabla_{\s}g(\bfV^{k+1}, \bfU^{k}, \s^{k}) - \nabla_{\s}g(\bfV^{k}, \bfU^{k}, \s^{k})\|_2\Bigr)^2 \\
    &\leq \frac{1}{\rho}\left(\lambda_{ex}\|\bfV^{k+1}\|_F^2\|\s^{k+1}-\s^{k}\|_2 + \lambda_{ex}\left(\|\bfV^{k+1}\|_F + \|\bfV^{k}\|_F\right)\|\bfV^{k+1}-\bfV^{k}\|_F\|\s^{k}\|_2\right)^2 \\
    &\leq \frac{3}{\rho}\left(\lambda_{ex}^2\|\bfV^{k+1}\|_F^4\|\s^{k+1}-\s^{k}\|_2^2 + \lambda_{ex}^2\left(\|\bfV^{k+1}\|_F^2 + \|\bfV^{k}\|_F^2\right)\|\bfV^{k+1}-\bfV^{k}\|_F^2\|\s^{k}\|_2^2\right) \\
    &\leq \frac{3\lambda_{ex}^2}{\rho}\left(C_V^2\|\s^{k+1}-\s^{k}\|_2^2 + 2C_VC_s\|\bfV^{k+1}-\bfV^{k}\|_F^2\right),
\end{align*}
where the third inequality follows from $(a+b+c)^2\leq 3(a^2 + b^2 + c^2)$ for $a,b,c\in \mathbb{R}$.
\end{proof}

\subsection{Proof of Lemma \ref{lem:L_rho_lb}}
\begin{proof}[Proof of Lemma \ref{lem:L_rho_lb}]
Under the assumption in Theorem~3.1, the function $g(\bfV^{k},\bfU^{k}, \s)$ is $\lambda_{ex} C_V$-smooth with respect to $\s$ from Lemma~\ref{lem:g_smooth}, and then we have, for any $\s,\s'$,
\begin{align}
\label{eq:g_smooth_s}
     g(\bfV^{k}, \bfU^{k}, \s') - g(\bfV^{k}, \bfU^{k}, \s) - \langle \nabla_{\s}g(\bfV^{k}, \bfU^{k}, \s), \s' - \s\rangle \leq \frac{\lambda_{ex} C_V}{2}\|\s - \s'\|_2^2.
\end{align}
By combining \cref{eq:w_eq} and \cref{eq:g_smooth_s}, we obtain:
\begin{align*}
    L_\rho(\bfV^{k}, \bfU^{k}, \s^{k}, \w^{k})
&= g(\bfV^{k}, \bfU^{k}, \s^{k}) + \frac{\rho}{2}\left\|\w^{k} + \frac{1}{|\calU|}(\bfU^{k})^\top \1 - \s^{k}\right\|_2^2 - \frac{\rho}{2} \|\w^{k}\|_2^2 \\
&= g(\bfV^{k}, \bfU^{k}, \s^{k}) + \rho\langle \w^{k}, \frac{1}{|\calU|}(\bfU^{k})^\top\1-\s^{k}\rangle + \frac{\rho}{2}\left\|\frac{1}{|\calU|}(\bfU^{k})^{\top}\1 - \s^{k}\right\|_2^2\\
&= g(\bfV^{k}, \bfU^{k}, \s^{k}) - \langle \nabla_{\s}g(\bfV^{k}, \bfU^{k}, \s^{k}), \s^{k} - \frac{1}{|\calU|}(\bfU^{k})^{\top}\1\rangle + \frac{\rho}{2}\left\|\frac{1}{|\calU|}(\bfU^{k})^{\top}\1 - \s^{k}\right\|_2^2 \\
&\geq g(\bfV^{k}, \bfU^{k}, \frac{1}{|\calU|}(\bfU^{k})^{\top}\1) + \frac{\rho-\lambda_{ex} C_V}{2}\left\|\frac{1}{|\calU|}(\bfU^{k})^{\top}\1 - \s^{k}\right\|_2^2 \\
&\geq \frac{\rho-\lambda_{ex} C_V}{2}\left\|\frac{1}{|\calU|}(\bfU^{k})^{\top}\1-\s^{k}\right\|_2^2,
\end{align*}
where the last inequality follows from $g(\bfV,\bfU,\s)\geq 0$ for any $\bv$, $\bfU$, and $\s$.
\end{proof}

\subsection{Proof for Lemma \ref{lem:g_smooth}}
\begin{proof}
For fixed $\bfU,\s$, for all $\bfV,\bfV'$, we have the following
\begin{align*}
    &\|\nabla_{\bfV}g(\bfV,\bfU,\s) - \nabla_{\bfV}g(\bfV',\bfU,\s)\|_F \\
    &=\sum_{j \in \calV}\left\|\left(\sum_{i \in \calU}r_{i,j}\bu_i\bu_i^{\top} + \alpha_0\bfG_U + \lambda_{ex}\s\s^{\top} + \lambda_V^{(j)}\I\right)(\bv_j-\bv_j')\right\|_2 \\
    &\leq \sum_{j \in \calV}\sum_{i \in \calU}\|r_{i,j}\bu_i\bu_i^{\top}(\bv_j-\bv_j')\|_2 + \sum_{j \in \calV}\sum_{i \in \calU}\| \alpha_0\bu_i\bu_i^{\top}(\bv_j-\bv_j')\|_2 +  \sum_{j \in \calV}\| \lambda_{ex}\s\s^{\top}(\bv_j-\bv_j')\|_2 + \sum_{j \in \calV}\| \lambda_V^{(j)}(\bv_j-\bv_j')\|_2 \\
    &= \sum_{j \in \calV}\sum_{i \in \calU}r_{i,j}|\bu_i^{\top}(\bv_j-\bv_j')| \cdot \|\bu_i\|_2 + \alpha_0\sum_{j \in \calV}\sum_{i \in \calU}|\bu_i^{\top}(\bv_j-\bv_j')| \cdot \| \bu_i\|_2 + \lambda_{ex}\sum_{j \in \calV}|\s^{\top}(\bv_j-\bv_j')| \cdot \|\s\|_2 + \sum_{j \in \calV}\|\lambda_V^{(j)}(\bv_j-\bv_j')\|_2 \\
    &\leq \sum_{j \in \calV}\sum_{i \in \calU}r_{i,j}\|\bu_i\|_2^2\|\bv_j-\bv_j'\|_2 + \alpha_0\sum_{j \in \calV}\sum_{i \in \calU}\|\bu_i\|_2^2\|\bv_j-\bv_j'\|_2 + \lambda_{ex}\sum_{j \in \calV}\|\s\|_2^2\|\bv_j-\bv_j'\|_2 + \bar{\lambda}_V \sum_{j \in \calV}\|\bv_j-\bv_j'\|_2 \\
    &\leq \left(\left(1 + \alpha_0\right)\sum_{i \in \calU}\|\bu_i\|_2^2 + \lambda_{ex}\|\s\|_2^2 + \bar{\lambda}_V\right)\sum_{j \in \calV}\|\bv_j-\bv_j'\|_2 \\
    &\leq \sqrt{|\calV|}\left(\left(1 + \alpha_0\right)\|\bfU\|_F^2 + \lambda_{ex}\|\s\|_2^2 + \bar{\lambda}_V\right)\|\bfV-\bfV'\|_F,
\end{align*}
where $\bar{\lambda}_V\coloneqq\max_{j \in \calV} \lambda_V^{(j)}$.
Note that the second inequality follows from the Cauchy--Schwarz inequality.

In addition, we have, for a fixed $\bfV$, for all $\s,\s'$ and $\bfU,\bfU'$,
\begin{align*}
    &\|\nabla_{\bfU} g(\bfV, \bfU, \s) - \nabla_{\bfU} g(\bfV, \bfU', \s')\|_F \\
    &= \sum_{i \in \calU}\Biggl\|\left(\sum_{j \in \calV}r_{i,j}\bv_j\bv_j^{\top} + \alpha_0\sum_{j \in \calV}\bv_j\bv_j^{\top} + \lambda_U^{(i)}\I\right)\bu_i - \sum_{j \in \calV}r_{i,j}\bv_j - \left(\sum_{j \in \calV}r_{i,j}\bv_j\bv_j^{\top} + \alpha_0\sum_{j \in \calV}\bv_j\bv_j^{\top} + \lambda_U^{(i)}\I\right)\bu_i' + \sum_{j \in \calV}r_{i,j}\bv_j\Biggr\|_2\\
    &= \sum_{i \in \calU}\left\|\left(\sum_{j \in \calV}r_{i,j}\bv_j\bv_j^{\top} + \alpha_0\sum_{j \in \calV}\bv_j\bv_j^{\top} + \lambda_U^{(i)}\I\right)(\bu_i-\bu_i')\right\|_2 \\
    &\leq \sum_{i \in \calU}\sum_{j \in \calV}\left\|\left(r_{i,j}\bv_j\bv_j^{\top}\right)(\bu_i-\bu_i')\right\|_2 + \sum_{i \in \calU}\sum_{j \in \calV}\left\|\left( \alpha_0\bv_j\bv_j^{\top}\right)(\bu_i-\bu_i')\right\| + \sum_{i \in \calU}\left\| \lambda_U^{(i)}(\bu_i-\bu_i')\right\|_2 \\
    &= \sum_{i \in \calU}\sum_{j \in \calV}r_{i,j}\left\|\bv_j\bv_j^{\top}(\bu_i-\bu_i')\right\|_2 + \alpha_0\sum_{i \in \calU}\sum_{j \in \calV}\left\| \bv_j\bv_j^{\top}(\bu_i-\bu_i')\right\|_2 + \sum_{i \in \calU}\left\| \lambda_U^{(i)}(\bu_i-\bu_i')\right\|_2 \\
    &= \sum_{i \in \calU}\sum_{j \in \calV}r_{i,j}|\bv_j^{\top}(\bu_i-\bu_i')|\left\|\bv_j\right\|_2 + \alpha_0\sum_{i \in \calU}\sum_{j \in \calV}|\bv_j^{\top}(\bu_i-\bu_i')|\left\|\bv_j\right\|_2 + \sum_{i \in \calU}\left\| \lambda_U^{(i)}(\bu_i-\bu_i')\right\|_2 \\
    &\leq \sum_{i \in \calU}\sum_{j \in \calV}r_{i,j}\left\|\bv_j\right\|_2^2\left\|\bu_i-\bu_i'\right\|_2 + \alpha_0\sum_{i \in \calU}\sum_{j \in \calV}\left\|\bv_j\right\|^2\left\|\bu_i-\bu_i'\right\|_2 + \sum_{i \in \calU}\left\| \lambda_U^{(i)}(\bu_i-\bu_i')\right\|_2 \\
    &\leq (1 + \alpha_0)\left(\sum_{j \in \calV}\|\bv_j\|_2^2\right)\sum_{i \in \calU}\|\bu_i-\bu_i'\|_2  + \bar{\lambda}_U\sum_{i \in \calU}\| \bu_i-\bu_i'\|_2\\
    &\leq \sqrt{|\calU|}\left((1 + \alpha_0)\|\bfV\|_F^2 + \bar{\lambda}_U\right)\|\bfU-\bfU'\|_F,
\end{align*}
where the second inequality follows from the Cauchy--Schwarz inequality.
%\end{comment}

Here, for a fixed $\bfV$, for all $\s,\s'$ and $\bfU,\bfU'$, we have:
\begin{align*}
    \|\nabla_{\s}g(\bfV, \bfU, \s) - \nabla_{\s}g(\bfV, \bfU', \s')\|_2 &= \left\|\lambda_{ex} \bfV^\top\bfV \s-\lambda_{ex} \bfV^\top\bfV\s'\right\|_2 \\ 
    &= \lambda_{ex}\|\bfV^{\top}\bfV(\s-\s')\|_2 \\
    &\leq \lambda_{ex}\|\bfV^{\top}\bfV\|_F\|\s-\s'\|_2 \\
    &\leq \lambda_{ex}\|\bfV\|_F^2\|\s-\s'\|_2,
\end{align*}
where the first/second inequality follows from the Cauchy--Schwarz inequality.

Finally, for fixed $\bfU$ and $\s$, for all $\bfV,\bfV'$, we have:
\begin{align*}
    \|\nabla_{\s}g(\bfV, \bfU, \s) - \nabla_{\s}g(\bfV', \bfU, \s)\|_2 
    %&= \left\|\lambda_{ex}\left(\sum_{j \in \calV}\bv_j\bv_j^{\top}\right)\s-\lambda_{ex}\left(\sum_{j \in \calV}\bv_j'\bv_j'^{\top}\right)\s\right\|_2 \\
    &= \lambda_{ex}\|\bfV^{\top}\bfV\s-\bfV'^{\top}\bfV'\s\|_2 \\
    &= \lambda_{ex}\|(\bfV^{\top}(\bfV-\bfV')+(\bfV-\bfV')^{\top}\bfV')\s\|_2 \\
    &\leq \lambda_{ex}\|\bfV^{\top}(\bfV-\bfV')\s\|_2+\lambda_{ex}\|(\bfV-\bfV')^{\top}\bfV'\s\|_2 \\
    &\leq \lambda_{ex}\|\bfV^{\top}(\bfV-\bfV')\|_F \cdot \|\s\|_2+\lambda_{ex}\|(\bfV-\bfV')^{\top}\bfV'\|_F\cdot \|\s\|_2 \\
    &\leq \lambda_{ex}(\|\bfV\|_F + \|\bfV'\|_F) \cdot \|\s\|_2 \|\bfV-\bfV'\|_F,
\end{align*}
where the second/third inequality follows from the Cauchy--Schwarz inequality.
\end{proof}
\end{document}